\pgfplotsset{compat=1.10}
\newtheorem*{rep@theorem}{\rep@title}
\newcommand{\newreptheorem}[2]{%
\newenvironment{rep#1}[1]{%
 \def\rep@title{\Cref{##1}}%
 \begin{rep@theorem}}%
 {\end{rep@theorem}}}
\newcommand*{\textlabel}[2]{%
  \edef\@currentlabel{#1}
  #1\label{#2}
}
\newcommand{\wast}{\bBigg@{2}}
\newcommand{\Wast}{\bBigg@{3}}
\newcommand{\vast}{\bBigg@{4}}
\newcommand{\Vast}{\bBigg@{5}}
\newtheorem{lemma}{Lemma}
\newtheorem{remark}{Remark}
\theoremstyle{definition}
\newtheorem{definition}{Definition}
\theoremstyle{plain}
\newtheorem{theorem}{Theorem}
\begin{document}

\title{Controllable Identifier Measurements for Private Authentication with Secret Keys}
\IEEEoverridecommandlockouts

\author{Onur G\"unl\"u,~\IEEEmembership{Student Member,~IEEE,}
		Kittipong Kittichokechai,~\IEEEmembership{Member,~IEEE,} Rafael F. Schaefer,~\IEEEmembership{Senior Member,~IEEE,} and Giuseppe Caire,~\IEEEmembership{Fellow,~IEEE}

\thanks{Manuscript received July 20, 2017; revised December 18, 2017; and accepted February 5, 2018. The work of O. G\"unl\"u was supported by the German Research Foundation (DFG) through the HoliPUF Project under Grant KR3517/6-1. The work of G. Caire was supported by an Alexander von Humboldt Professorship. Part of this paper was presented at the 2016 Asilomar Conference on Signals, Systems, and Computers \cite{bizimAsilomar}. The associate editor coordinating the review of this manuscript and approving it for publication was Dr. Tanya Ignatenko (\textit{Corresponding Author: Onur G\"unl\"u}).}
\thanks{O. G\"unl\"u is with the Chair of Communications Engineering, Technical University of Munich, 80333 Munich, Germany (e-mail: onur.gunlu@tum.de).
	
	 K. Kittichokechai was with the Communications and Information Theory Chair, Technische Universit\"at Berlin, 10623 Berlin, Germany. He is now with the Ericsson Research, 164 83 Stockholm, Sweden (e-mail: kittipong.kittichokechai@ericsson.com).
	 
	 R. F. Schaefer is with the Information Theory and Applications Chair, Technische Universit\"at Berlin, 10623 Berlin, Germany (email: rafael.schaefer@tu-berlin.de).
	 
     G. Caire is with the Communications and Information Theory Chair, Technische Universit\"at Berlin, 10623 Berlin, Germany (email: caire@tu-berlin.de).
	  
	 Digital Object Identifier 10.1109/TIFS.2018.2806937
	  }
}

\maketitle

\begin{abstract}
	The problem of secret-key based authentication under a privacy constraint on the source sequence is considered. The identifier measurements during authentication are assumed to be controllable via a cost-constrained ``action" sequence. Single-letter characterizations of the optimal trade-off among the secret-key rate, storage rate, privacy-leakage rate, and action cost are given for the four problems where noisy or noiseless measurements of the source are enrolled to generate or embed secret keys. The results are relevant for several user-authentication scenarios including physical and biometric authentications with multiple measurements. Our results include, as special cases, new results for secret-key generation and embedding with action-dependent side information without any privacy constraint on the enrolled source sequence. 
\end{abstract}

\begin{IEEEkeywords}
	Private authentication, information theoretic security, action dependent privacy, hidden source.
\end{IEEEkeywords}

\IEEEpeerreviewmaketitle

\section{Introduction}\label{sec:introduction}
\IEEEPARstart{W}{e} study a problem of private authentication based on key generation or embedding, motivated by emerging technologies such as biometric authentication \cite{RaneSurvey} and key generation from physical unclonable functions (PUFs) \cite{pufintheory}. The system consists of an encoder and a decoder that observe different measurements of an identifier output and want to agree on a key, secret from an eavesdropper. 

Replacing biometric identifiers is generally impossible \cite{IgnaTrans}, and replacing physical identifiers is expensive or unappealing, for instance, if the new identifier outputs and the replaced ones are dependent. Therefore, for such applications, privacy of the identifier output is of significant importance because the biometric or physical source is closely related to the identity of a person or a device. There exists a fundamental trade-off between privacy and security performance of an authentication system. An information theoretic formulation provides a framework to capture such a trade-off \cite{IgnaTrans}, \cite{LaiTrans}. Moreover, the identifier measurements can be controlled or tuned with an additional cost. In this work, we study the optimal trade-offs among the secret-key rate, public storage rate, privacy-leakage rate, and expected action cost for discrete memoryless sources and measurement channels. Availability of post-processing methods in, e.g., \cite{bizimtemperature} to obtain memoryless channels and sources from biometric or physical identifiers allows us to not consider channels with memory and correlated sources, which are considered, e.g., in \cite{RennerWolf} and \cite{TyagiWatanabe}.

\subsection{Motivation}
The use of authentication for access control is an effective method to ensure information security. Unlike concealing the data to be transmitted \cite{WTC}, authentication of a user by using a secret requires correlated random variables in order to agree on a sequence \cite{AhlswedeCsiz,Maurer}. Most important physical identifiers used for device authentication are PUFs, e.g., random variations in ring oscillator (RO) outputs or in speckle patterns of optical tokens when irradiated by a laser. Similarly, body traits like irises and fingerprints are used as biometric randomness sources for authentication. There are code constructions in the biometric secrecy literature proposed for authentication, e.g., the fuzzy-vault scheme \cite{Fuzzyvault}, fuzzy-commitment scheme \cite{FuzzyCommitment}, and (code-offset) fuzzy extractors \cite{Dodis2008fuzzy}. It is shown in \cite{WZCodingproofarxiv} that the fuzzy-commitment scheme and fuzzy extractors are suboptimal for a simplified version of the private authentication problem we consider in this work. Accordingly, we are interested in understanding the fundamental limits of private authentication by studying optimal code constructions and their rate regions. 

Motivated by the use of biometric or physical identifiers that involve different forms of measurements, e.g., the use of multiple measurements or variations in the quality of the measurement process \cite{permuter}, \cite{Kittipong_a}, we consider a new private authentication model where the measurement process is represented by a cost-constrained action-dependent side information acquisition, where an action sequence determines the measurement channel. A high action cost can, for instance, represent the use of a high quality measurement device. 

There are two canonical models for private authentication: generated-secret model and chosen-secret model. We first consider the \emph{generated-secret model}, where the secret key is generated from the identifier outputs. The secret key reconstructed at the decoder is generally stored in a trusted database. It can therefore be practical to embed a uniformly-distributed and independently chosen secret key into the encoder rather than generating it from identifier outputs \cite{IgnaTrans}. The encoder binds the key to the identifier outputs in order to provide private authentication at the decoder. We also consider this practical model, called the \emph{chosen-secret model}, with cost-constrained actions and show that its general implication is an increased need of storage. Remark that the fuzzy-commitment scheme and fuzzy extractors are realizations of, respectively, the chosen- and generated-secret models.

Biometric and physical identifier outputs are noisy by nature. For instance, a cut in the palm corresponds to noise on the palmprint. Similar to multiple-antenna systems, multiple identifier measurements at the decoder can therefore significantly improve the rate regions as compared to a single measurement. Suppose we have multiple measurements also at the encoder, which assumes that the source is hidden or remote. A hidden or remote source represents that the encoder observes one or multiple noisy measurements of a source rather than the source output. It is shown in \cite{bizimproofarxiv} that if a visible source is mistakenly assumed for system design, there can be unnoticed secrecy leakages and the reliability at the decoder can decrease. Motivated by these results, we study also hidden identifiers with cost-constrained actions for the generated- and chosen-secret models.

\subsection{Summary of Contributions and Organization} 
 In \cite{bizimproofarxiv}, the enlargements of the rate regions due to increasing multiplicity of noisy measurements of a hidden source are illustrated. An attacker with access to a correlated identifier measurement tries to deceive the authentication in \cite{Kittipong_b}. We combine and extend the models in \cite{bizimproofarxiv} and \cite{Kittipong_b}, and consider a cost-constrained action sequence that controls the source measurements during authentication to reconstruct the secret key. In this work, the secret key can be either generated or embedded. Multiple identifier measurements both at the encoder and decoder are also possible by considering a hidden identifier. Similar to \cite{Kittipong_b}, correlated information at the eavesdropper is also considered here unlike in \cite{IgnaTrans}, \cite{LaiTrans}, and \cite{bizimproofarxiv}, which is a realistic assumption especially for biometric identifiers. The key-storage-leakage-cost region for secret-key generation from an identifier with a cost-constrained action at the decoder and a noiseless (visible) output at the encoder is given first in the conference version of this paper \cite{bizimAsilomar}. This rate region recovers several results in the literature including the key-leakage rate regions for a visible source in \cite{IgnaTrans} and \cite{LaiTrans}.

In this work, we further study the following extensions and the main contributions are as follows.
\begin{itemize}
	\item We extend the region for key generation to a chosen secret-key embedding scenario, where the source output is used to conceal the chosen secret key. 
	\item For a hidden source, we show that the key-storage-leakage-cost region is significantly different from the visible source model for both key generation and embedding scenarios. Comparisons among these regions illustrate that an incorrect system model could result in secrecy and reliability threats.
	\item As an example, we use realistic channel and source models to generate secret keys from PUFs and illustrate the key-leakage trade-off for a binary physical identifier with cost-constrained actions during authentication.     
\end{itemize}

This paper is organized as follows. In Section~\ref{sec:problem_setting}, we describe the source models and the generated- and chosen-secret models. We develop the key-storage-leakage-cost regions for the four problems, and compare them with each other and previous results in Section~\ref{sec:regions}. An achievable key-storage-leakage-cost region for a binary source with cost-constrained measurements during authentication is illustrated in Section~\ref{sec:examples}.


\subsection{Notation}
Upper case letters represent random variables and lower case letters their realizations. Superscripts denote a string of variables, e.g., $\displaystyle X^n\!=\!X_1\ldots X_i\ldots X_n$, and subscripts denote the position of a variable in a string. $X^{n\setminus i}$ represents the vector $(X_1,X_2,\ldots,X_{i-1},X_{i+1},\ldots,X_n)$. A random variable $\displaystyle X$ has probability distribution $\displaystyle P_X$. Calligraphic letters such as $\displaystyle \mathcal{X}$ denote sets and their sizes are written as $\displaystyle |\mathcal{X}|$. A set, e.g., $\mathcal{X}^n$, with superscript $n$ denotes an $n$-fold product-distribution set, and a set, e.g., $\mathcal{W}^{(n)}$, with superscript in parentheses $(n)$ denotes a set whose size grows with the superscript $n$. $\displaystyle \mathcal{T}_{\epsilon}^{n}(\cdot)$ denotes the set of length-$n$ letter-typical sequences with respect to the positive number $\displaystyle \epsilon$ \cite[Ch. 3]{masseylecturenotes}, \cite{orlitsky}. $X-Y-Z$ indicates that $(X,Y,Z)$ forms a Markov chain. $H_b(x)\!=\!-x\log x\!-\! (1\!-\!x)\log (1\!-\!x)$ is the binary entropy function and $\displaystyle H_b^{-1}(\cdot)$ denotes its inverse with range $[0,0.5]$. The $*$-operator is defined as $\displaystyle p\!*\!x\! =\! p(1\!-\!x)\!+\!(1\!-\!p)x$.

\section{Problem Formulations}\label{sec:problem_setting}
We define the four problems in the following.

\begin{figure}
	\centering
	\resizebox{0.75\linewidth}{!}{
		\begin{tikzpicture}
		\node (so) at (-1.5,-3.5) [draw,rounded corners = 5pt, minimum width=1.0cm,minimum height=0.8cm, align=left] {$P_X$};
		\node (a) at (0,-1.5) [draw,rounded corners = 6pt, minimum width=3.2cm,minimum height=0.6cm, align=left] {$
			(W, K) \overset{(a)}{=}f_1^{(n)}(X^n)$\\$W \overset{(b)}{=} f_2^{(n)}(X^n,K)$};
		\node (c) at (5,-3.5) [draw,rounded corners = 5pt, minimum width=1.3cm,minimum height=0.6cm, align=left] {$P_{YZ|XA}$};
		\node (b) at (5,-1.5) [draw,rounded corners = 6pt, minimum width=3.2cm,minimum height=0.9cm, align=left] {$\hat{K} = g^{(n)}\left(W,Y^n\right)$\\\smallskip$A^n=f_{a}^{(n)}(W)$};
		\node (g) at (5,-5) [draw,rounded corners = 5pt, minimum width=1cm,minimum height=0.6cm, align=left] {EVE};
		\draw[decoration={markings,mark=at position 1 with {\arrow[scale=1.5]{latex}}},
		postaction={decorate}, thick, shorten >=1.4pt] (a.east) -- (b.west) node [midway, above] {$W$};
		\node (a1) [below of = a, node distance = 2cm] {$X^n$};
		\draw[decoration={markings,mark=at position 1 with {\arrow[scale=1.5]{latex}}},
		postaction={decorate}, thick, shorten >=1.4pt] ($(c.north)+(0.3,0)$) -- ($(b.south)+(0.3,0)$) node [midway, right] {$Y^n$};
		\draw[decoration={markings,mark=at position 1 with {\arrow[scale=1.5]{latex}}},
		postaction={decorate}, thick, shorten >=1.4pt] (so.east) -- (a1.west);
		\draw[decoration={markings,mark=at position 1 with {\arrow[scale=1.5]{latex}}},
		postaction={decorate}, thick, shorten >=1.4pt] (a1.north) -- (a.south);
		\draw[decoration={markings,mark=at position 1 with {\arrow[scale=1.5]{latex}}},
		postaction={decorate}, thick, shorten >=1.4pt] (a1.east) -- (c.west);
		\draw[decoration={markings,mark=at position 1 with {\arrow[scale=1.5]{latex}}},
		postaction={decorate}, thick, shorten >=1.4pt] ($(b.south)-(0.3,0)$) -- ($(c.north)-(0.3,0)$)node [midway, left] {$A^n$};
		\draw[decoration={markings,mark=at position 1 with {\arrow[scale=1.5]{latex}}},
		postaction={decorate}, thick, shorten >=1.4pt] (c.south) -- (g.north) node [midway, right] {$Z^n$};
		\node (a2) [above of = a, node distance = 1.8cm] {$K$};
		\node (b2) [above of = b, node distance = 1.8cm] {$\hat{K}$};
		\draw[decoration={markings,mark=at position 1 with {\arrow[scale=1.5]{latex}}},
		postaction={decorate}, thick, shorten >=1.4pt] (b.north) -- (b2.south);
		\draw[densely dotted, decoration={markings,mark=at position 1 with {\arrow[scale=1.5]{latex}}},
		postaction={decorate}, thick, shorten >=1.4pt]  ($(a2.south)+(0.3,0)$)-- ($(a.north)+(0.3,0)$) node [midway, right] {$(b)$};
		\draw[dashed, decoration={markings,mark=at position 1 with {\arrow[scale=1.5]{latex}}},
		postaction={decorate}, thick, shorten >=1.4pt]  ($(a.north)-(0.3,0)$)-- ($(a2.south)-(0.3,0)$) node [midway, left] {$(a)$};
		\draw[decoration={markings,mark=at position 1 with {\arrow[scale=1.5]{latex}}},
		postaction={decorate}, thick, shorten >=1.4pt] ($(a.east)+(1,0)$) -- ($(a.east)+(1,-3.5)$) -- ($(a.east)+(1,-3.5)$) -- (g.west);
		\end{tikzpicture}
	}
	\caption{A visible source: $(a)$ represents the generated-secret model with the encoder $f_1^{(n)}(\cdot)$ and $(b)$ represents the chosen-secret model with the encoder $f_2^{(n)}(\cdot,\cdot)$. The decoder and EVE measurements can be performed after observing the action sequence.}\label{fig:visible}
\end{figure}
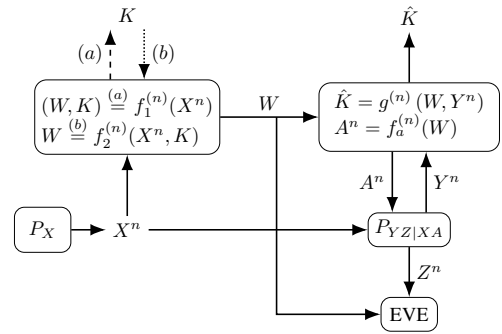

\subsection{Visible Source, Generated-secret Model}
Consider the system model in Fig. \ref{fig:visible}$(a)$. The source $\mathcal{X}$, measurements $\mathcal{Y},\mathcal{Z}$, and action $\mathcal{A}$ alphabets are finite sets. Let $X^n$ be a length-$n$ sequence which has independent and identically distributed (i.i.d.) components distributed according to some fixed distribution $P_{X}$. Authentication has two phases. First, a user enrolls the source sequence $X^n$ in the system to generate the helper data $W$ and the secret key $K$. A cost-constrained action sequence $A^n$ is chosen based on $W$ to control quality or reliability of the measurements during the authentication, during which $(Y^n,Z^n)$ are generated as outputs of a given memoryless channel $P_{YZ|XA}$ with inputs $X^n$ and $A^n$. The sequence $Y^n$ here represents a controllable measurement (side information) while $Z^n$ is another correlated side information.  Based on $W$ and measurement $Y^n$, the decoder reconstructs the secret key $\hat{K}$. Authentication is successful if $\hat{K}=K$. For generality, we consider an eavesdropper (EVE) who has access to the description $W$ and correlated side information $Z^n$.

\begin{definition}
   \normalfont	A $(|\mathcal{W}^{(n)}|,|\mathcal{K}^{(n)}|,n)$-code $\mathcal{C}_n$ for private authentication with a key generated from a visible source, controllable decoder measurements, and a noiseless encoder measurement consists of
	\begin{itemize}
		\item an encoder $f_1^{(n)}: \mathcal{X}^{n} \rightarrow \mathcal{W}^{(n)} \times \mathcal{K}^{(n)}$,
		\item an action encoder: $f_{a}^{(n)}: \mathcal{W}^{(n)} \rightarrow \mathcal{A}^{n}$,
		\item a decoder $g^{(n)}: \mathcal{W}^{(n)} \times \mathcal{Y}^{n} \rightarrow \mathcal{K}^{(n)}$. 
		\hfill $\lozenge$
	\end{itemize}
\end{definition}

\begin{definition}\label{def:achievabilitygeneratedsecret}
	\normalfont A key-storage-leakage-cost tuple $(R_k,R_w,\Delta,C)\in\mathbb{R}^{4}_{+}$ is said to be \emph{achievable} for a visible source with the generated-secret model if for any $\delta>0$ there is some $n\!\geq\!1$ and a $\big(|\mathcal{W}^{(n)}|, |\mathcal{K}^{(n)}|,n\big)$-code for which $\displaystyle R_k=\frac{\log|\mathcal{K}^{(n)}|}{n}$ such that
	\begin{align}
	&\Pr[\hat{K} \neq K] \leq \delta,&&\quad \quad(reliability)\label{eq:reliability_constraint}\\
	&\frac{1}{n} I(K;W,Z^n) \leq \delta&&\quad\quad(secrecy)\label{eq:secrecyleakage_constraint}\\
	&\frac{1}{n}H(K) \geq R_k- \delta&&\quad\quad(uniformity) \label{eq:uniformity_constraint}\\
	&\frac{1}{n}\log\big|\mathcal{W}^{(n)}\big| \leq R_w+\delta&&\quad\quad(storage)\label{eq:storage_constraint}\\
	&\frac{1}{n}I(X^n;W,Z^n) \leq \Delta+\delta&&\quad\quad(privacy)\label{eq:leakage_constraint}\\
	&\mathbb{E}[\Gamma^{(n)}(A^n)] \leq C+\delta&&\quad\quad(cost)\label{eq:cost_constraint}
	\end{align}
	where we have $(W,K)\! =\! f_1^{(n)}(X^n)$, $A^n \!=\! f_{a}^{(n)}(W)$, $\hat{K}\!=\!g^{(n)}(W,Y^n)$, and $\Gamma^{(n)}(\cdot)$ is a cost function with $\Gamma^{(n)}(A^n)\!=\!\frac{1}{n}\sum_{i=1}^n\Gamma(A_i)$. The \emph{key-storage-leakage-cost} region $\mathcal{R}_{gs}$ is the closure of the set of all achievable tuples.\hfill $\lozenge$
\end{definition}

\subsection{Visible Source, Chosen-secret Model}
Consider the problem of binding a secret key to a visible identifier, illustrated in Fig.~\ref{fig:visible}$(b)$. The decoder observes cost-constrained controllable measurements during authentication, whereas the encoder observes the noiseless source outputs. 

\begin{definition}
	\normalfont A $\big(|\mathcal{W}^{(n)}|, |\mathcal{K}^{(n)}|,n\big)$-code $\mathcal{C}_n$ for private authentication with an embedded key concealed by a visible source, controllable decoder measurements, and a noiseless encoder measurement consists of
	\begin{itemize}
		\item[$\bullet$] an encoder $f_2^{(n)}: \mathcal{X}^{n}\times \mathcal{K}^{(n)}\rightarrow \mathcal{W}^{(n)}$,
		\item[$\bullet$] an action encoder $f_{a}^{(n)}: \mathcal{W}^{(n)}\rightarrow\mathcal{A}^{n}$,
		\item[$\bullet$] a decoder $g^{(n)}: \mathcal{W}^{(n)}\times\mathcal{Y}^{n}\rightarrow \mathcal{K}^{(n)}$.\hfill$\lozenge$
	\end{itemize}
	
\end{definition}

\begin{definition} \label{def:achievabilitychosensecret}
	A key-storage-leakage-cost tuple $(R_k, R_w,\Delta,C)\in \mathbb{R}^4_{+}$ is said to be \emph{achievable} for a visible source with the chosen-secret model if for any $\delta\!>\!0$ there is some $n\!\geq\!1$ and a $\big(|\mathcal{W}^{(n)}|, |\mathcal{K}^{(n)}|,n\big)$-code for which $\displaystyle R_k\!=\!\frac{\log|\mathcal{K}^{(n)}|}{n}$ such that (\ref{eq:reliability_constraint})-(\ref{eq:cost_constraint}) are satisfied, where we have $W\!=\!f_2^{(n)}(X^n,K)$, $A^n\!=\!f_{a}^{(n)}(W)$, $\hat{K}\!=\!g^{(n)}(W,Y^n)$, and $\Gamma^{(n)}(A^n)\!=\!\frac{1}{n}\sum_{i=1}^n\Gamma(A_i)$. The \textit{key-storage-leakage-cost} region $\mathcal{R}_{cs}$ is the closure of all achievable tuples.\hfill$\lozenge$
\end{definition}

\subsection{Hidden Source, Generated-secret Model}
Consider the system model in Fig.~\ref{fig:hidden}$(a)$, where a key is generated from a hidden source. The decoder observes cost-constrained controllable source measurements $Y^n$ during authentication, whereas the encoder observes uncontrollable noisy measurements $\widetilde{X}^n$ of the hidden source outputs $X^n$ through a memoryless channel $P_{\widetilde{X}|X}$. The source alphabet $\mathcal{X}$, the measurement alphabets $\widetilde{\mathcal{X}},\mathcal{Y}, \mathcal{Z}$, and the action alphabet $\mathcal{A}$ are finite sets.

\begin{figure}
	\centering
	\resizebox{0.75\linewidth}{!}{
		\begin{tikzpicture}
		\node (so) at (-1.5,-3.5) [draw,rounded corners = 5pt, minimum width=1.0cm,minimum height=0.8cm, align=left] {$P_X$};
		\node (a) at (0,-0.5) [draw,rounded corners = 6pt, minimum width=3.2cm,minimum height=0.6cm, align=left] {$
			(W,K) \overset{(a)}{=} f_3^{(n)}(\widetilde{X}^n)$\\$W \overset{(b)}{=} f_4^{(n)}(\widetilde{X}^n,K)$};
		\node (c) at (5,-3.5) [draw,rounded corners = 5pt, minimum width=1.3cm,minimum height=0.6cm, align=left] {$P_{YZ|XA}$};
		\node (f) at (0,-2.25) [draw,rounded corners = 5pt, minimum width=1cm,minimum height=0.6cm, align=left] {$P_{\widetilde{X}|X}$};
		\node (b) at (5,-0.5) [draw,rounded corners = 6pt, minimum width=3.2cm,minimum height=0.9cm, align=left] {$\hat{K} = g^{(n)}\left(W,Y^n\right)$\\\smallskip$A^n=f_{a}^{(n)}(W)$};
		\node (g) at (5,-5) [draw,rounded corners = 5pt, minimum width=1cm,minimum height=0.6cm, align=left] {EVE};
		\draw[decoration={markings,mark=at position 1 with {\arrow[scale=1.5]{latex}}},
		postaction={decorate}, thick, shorten >=1.4pt] (a.east) -- (b.west) node [midway, above] {$W$};
		\node (a1) [below of = a, node distance = 3cm] {$X^n$};
		\draw[decoration={markings,mark=at position 1 with {\arrow[scale=1.5]{latex}}},
		postaction={decorate}, thick, shorten >=1.4pt] ($(c.north)+(0.3,0)$) -- ($(b.south)+(0.3,0)$) node [midway, right] {$Y^n$};
		\draw[decoration={markings,mark=at position 1 with {\arrow[scale=1.5]{latex}}},
		postaction={decorate}, thick, shorten >=1.4pt] (so.east) -- (a1.west);
		\draw[decoration={markings,mark=at position 1 with {\arrow[scale=1.5]{latex}}},
		postaction={decorate}, thick, shorten >=1.4pt] (a1.north) -- (f.south);
		\draw[decoration={markings,mark=at position 1 with {\arrow[scale=1.5]{latex}}},
		postaction={decorate}, thick, shorten >=1.4pt] (f.north) -- (a.south) node [midway, right] {$\widetilde{X}^n$};
		\draw[decoration={markings,mark=at position 1 with {\arrow[scale=1.5]{latex}}},
		postaction={decorate}, thick, shorten >=1.4pt] (a1.east) -- (c.west);
		\draw[decoration={markings,mark=at position 1 with {\arrow[scale=1.5]{latex}}},
		postaction={decorate}, thick, shorten >=1.4pt] ($(b.south)-(0.3,0)$) -- ($(c.north)-(0.3,0)$)node [midway, left] {$A^n$};
		\draw[decoration={markings,mark=at position 1 with {\arrow[scale=1.5]{latex}}},
		postaction={decorate}, thick, shorten >=1.4pt] (c.south) -- (g.north) node [midway, right] {$Z^n$};
		\node (a2) [above of = a, node distance = 1.8cm] {$K$};
		\node (b2) [above of = b, node distance = 1.8cm] {$\hat{K}$};
		\draw[decoration={markings,mark=at position 1 with {\arrow[scale=1.5]{latex}}},
		postaction={decorate}, thick, shorten >=1.4pt] (b.north) -- (b2.south);
		\draw[densely dotted, decoration={markings,mark=at position 1 with {\arrow[scale=1.5]{latex}}},
		postaction={decorate}, thick, shorten >=1.4pt]  ($(a2.south)+(0.3,0)$)-- ($(a.north)+(0.3,0)$) node [midway, right] {$(b)$};
		\draw[dashed, decoration={markings,mark=at position 1 with {\arrow[scale=1.5]{latex}}},
		postaction={decorate}, thick, shorten >=1.4pt]  ($(a.north)-(0.3,0)$)-- ($(a2.south)-(0.3,0)$) node [midway, left] {$(a)$};;
		\draw[decoration={markings,mark=at position 1 with {\arrow[scale=1.5]{latex}}},
		postaction={decorate}, thick, shorten >=1.4pt] ($(a.east)+(1,0)$) -- ($(a.east)+(1,-4.5)$) -- ($(a.east)+(1,-4.5)$) -- (g.west);
		\end{tikzpicture}
	}
	\caption{A hidden source: $(a)$ represents the generated-secret model with the encoder $f_3^{(n)}(\cdot)$ and $(b)$ represents the chosen-secret model with the encoder $f_4^{(n)}(\cdot,\cdot)$. The decoder and EVE measurements can be performed after observing the action sequence.}\label{fig:hidden}
\end{figure}
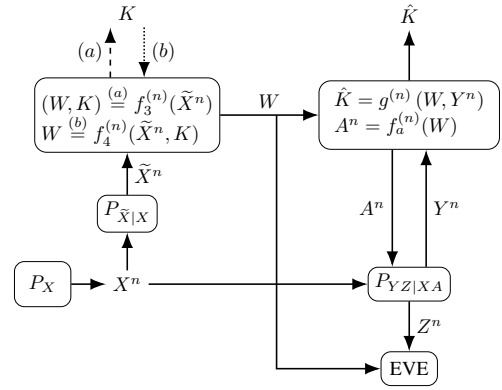

\begin{definition}
	\normalfont A $\big(|\mathcal{W}^{(n)}|, |\mathcal{K}^{(n)}|,n\big)$-code $\mathcal{C}_n$ for private authentication with a key generated from noisy measurements of a hidden source, controllable decoder measurements, and noisy encoder measurements consists of
	\begin{itemize}
		\item[$\bullet$] an encoder $f_3^{(n)}: \widetilde{\mathcal{X}}^{n}\rightarrow \mathcal{W}^{(n)}\times\mathcal{K}^{(n)}$,
		\item[$\bullet$] an action encoder $f_{a}^{(n)}: \mathcal{W}^{(n)}\rightarrow\mathcal{A}^{n}$,
		\item[$\bullet$] a decoder $g^{(n)}: \mathcal{W}^{(n)}\times\mathcal{Y}^{n}\rightarrow \mathcal{K}^{(n)}$.\hfill$\lozenge$
	\end{itemize}
\end{definition}

\begin{definition}
	A key-storage-leakage-cost tuple $(R_k, R_w,\Delta,C)\!\in\! \mathbb{R}^4_{+}$ is said to be \emph{achievable} for a hidden source with the generated-secret model if for any $\delta\!>\!0$ there is some $n\!\geq\!1$ and a $\big(|\mathcal{W}^{(n)}|, |\mathcal{K}^{(n)}|,n\big)$-code for which $\displaystyle R_k\!=\!\frac{\log|\mathcal{K}^{(n)}|}{n}$ such that (\ref{eq:reliability_constraint})-(\ref{eq:cost_constraint}) are satisfied, where we have $(W,K)\!=\!f_3^{(n)}(\widetilde{X}^n)$, $A^n\!=\!f_{a}^{(n)}(W)$, $\hat{K}\!=\!g^{(n)}(W,Y^n)$, and $\Gamma^{(n)}(A^n)\!=\!\frac{1}{n}\sum_{i=1}^n\Gamma(A_i)$. The \textit{key-storage-leakage-cost} region $\mathcal{R}_{hgs}$ is the closure of all achievable tuples.\hfill$\lozenge$
\end{definition}

\subsection{Hidden Source, Chosen-secret Model}

Consider the problem of binding a chosen secret key to a hidden biometric or physical identifier, as shown in Fig.~\ref{fig:hidden}$(b)$. The decoder observes cost-constrained controllable source measurements during authentication, whereas the encoder observes uncontrollable noisy source outputs. 

\begin{definition}
	\normalfont A $\big(|\mathcal{W}^{(n)}|, |\mathcal{K}^{(n)}|,n\big)$-code $\mathcal{C}_n$ for private authentication with an embedded secret key concealed by noisy measurements of a hidden source, controllable decoder measurements, and noisy encoder measurements consists of
	\begin{itemize}
		\item[$\bullet$] an encoder $f_4^{(n)}: \widetilde{\mathcal{X}}^{n}\times\mathcal{K}^{(n)}\rightarrow \mathcal{W}^{(n)}$,
		\item[$\bullet$] an action encoder $f_{a}^{(n)}: \mathcal{W}^{(n)}\rightarrow\mathcal{A}^{n}$,
		\item[$\bullet$] a decoder $g^{(n)}: \mathcal{W}^{(n)}\times\mathcal{Y}^{n}\rightarrow \mathcal{K}^{(n)}$.\hfill$\lozenge$
	\end{itemize}
\end{definition}

\begin{definition}
	A key-storage-leakage-cost tuple $(R_k, R_w,\Delta,C)\!\in\! \mathbb{R}^4_{+}$ is said to be \emph{achievable} for a hidden source with the chosen-secret model if for any $\delta\!>\!0$ there is some $n\!\geq\!1$ and a $\big(|\mathcal{W}^{(n)}|, |\mathcal{K}^{(n)}|,n\big)$-code for which $\displaystyle R_k\!=\!\frac{\log|\mathcal{K}^{(n)}|}{n}$ such that (\ref{eq:reliability_constraint})-(\ref{eq:cost_constraint}) are satisfied, where we have $W\!=\!f_4^{(n)}(\widetilde{X}^n, K)$, $A^n\!=\!f_{a}^{(n)}(W)$, $\hat{K}\!=\!g^{(n)}(W,Y^n)$, and $\Gamma^{(n)}(A^n)\!=\!\frac{1}{n}\sum_{i=1}^n\Gamma(A_i)$. The \textit{key-storage-leakage-cost} region $\mathcal{R}_{hcs}$ is the closure of all achievable tuples.\hfill$\lozenge$
\end{definition}

\begin{remark}
	\normalfont The encoder- and decoder-measurement channels in Fig.~\ref{fig:hidden} are modeled as two separate channels, i.e., $\widetilde{X}-(A,X)-(Y,Z)$ forms a Markov chain. This is the case if, e.g., there is a considerable amount of time between the encoder and decoder measurements of a palmprint so that the cuts on it during enrollment and authentication are independent.
\end{remark}

\section{Key-storage-leakage-cost Regions}\label{sec:regions}
We are interested in characterizing the optimal trade-off among the secret-key rate, storage rate, privacy-leakage rate, and expected action cost. We give the rate regions for all cases.

\begin{theorem}[\textit{Visible Source, Generated-secret}]\label{theo:gsvsregion}
	For given $P_{X}$ and $P_{YZ|XA}$, the key-storage-leakage-cost region $\mathcal{R}_{gs}$ is given as the set of all tuples $(R_k,R_w,\Delta,C) \in \mathbb{R}_{+}^4$ satisfying 
	\begin{align}
	&R_k \leq I(V;Y|A,U)-I(V;Z|A,U)\label{eq:keyrate}\\
	&R_w \geq I(X;A)+I(V;X|A,Y) \label{eq:storagerate}\\
	&\Delta\! \geq\! I(X;A,V,Y) \!+\! I(X;Z|A,U)\!-\!I(X;Y|A,U) \label{eq:leakagerate}
	\end{align}
	for some $P_{X}P_{A|X}P_{YZ|XA}P_{V|XA}P_{U|V}$ such that $\mathbb{E}[\Gamma(A)]\leq C$ with $|\mathcal{U}|\leq |\mathcal{X}||\mathcal{A}|+2$ and $|\mathcal{V}|\leq (|\mathcal{X}||\mathcal{A}|+2)(|\mathcal{X}||\mathcal{A}|+1)$.
\end{theorem}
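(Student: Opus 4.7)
The plan is to establish \ref{theo:gsvsregion} via a standard combination of a layered random-coding scheme for achievability and an auxiliary-variable converse, adapting the Permuter--Weissman treatment of action-dependent side information to the wiretap secret-key setting of Ahlswede--Csisz\'ar and Csisz\'ar--Narayan. The splitting between a ``cloud'' auxiliary $U$ and a ``satellite'' auxiliary $V$ is exactly what is needed to carry a Wyner--Ziv reconciliation layer at the decoder and a privacy-amplification layer against the eavesdropper through the same pair of auxiliaries.

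For the direct part, I would fix a joint distribution $P_X P_{A|X} P_{YZ|XA} P_{V|XA} P_{U|V}$ with $\mathbb{E}[\Gamma(A)]\leq C$ and build a three-layer random code. First, an action codebook of $2^{n(I(X;A)+\epsilon)}$ i.i.d.\ $A^n$-sequences from which one is chosen jointly typical with $X^n$. Second, for that $A^n$, a superposition codebook of $2^{n(I(U;X|A)+\epsilon)}$ cloud centres $U^n\sim P_{U|A}^n$ and, per cloud, $2^{n(I(V;X|U,A)+\epsilon)}$ satellites $V^n\sim P_{V|U,A}^n$, from which a triple $(A^n,U^n,V^n)$ jointly typical with $X^n$ is chosen. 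Third, Wyner--Ziv binning of the $U^n$-codebook at rate $I(U;X|A)-I(U;Y|A)$ and double random binning of the $V^n$-codebook, reserving one index of rate $R_k$ as the key and one of rate $I(V;X|U,A)-I(V;Y|U,A)$ as a forward bin for reconciliation. The helper data $W$ collects the action index, the $U^n$-bin index, and the $V^n$-forward-bin index; joint-typicality decoding then recovers $(A^n,U^n,V^n)$ from $(W,Y^n)$ and hence the key. A chain-rule calculation collapses the storage rate via
\begin{align*}
 R_w \geq I(X;A)+\bigl[I(U;X|A)-I(U;Y|A)\bigr]+\bigl[I(V;X|U,A)-I(V;Y|U,A)\bigr] = I(X;A)+I(V;X|A,Y),
\end{align*}
using the Markov chains $U-V-(X,A)$ and $V-(X,A)-Y$; the soft-covering (resolvability) lemma applied to the $V^n$-codebook conditioned on the recovered $U^n$ delivers the strong-secrecy bound $R_k\leq I(V;Y|A,U)-I(V;Z|A,U)$ and the leakage bound \eqref{eq:leakagerate}.

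For the converse, given any code satisfying \eqref{eq:reliability_constraint}--\eqref{eq:cost_constraint}, I would introduce the auxiliaries
\begin{align*}
 U_i \defeq (W,Y^{i-1},Z_{i+1}^{n}), \qquad V_i \defeq (U_i, X^{i-1}),
\end{align*}
so that $A_i$ is a function of $U_i$ (because $A^n=f_a^{(n)}(W)$) and the Markov chain $U_i-V_i-(X_i,A_i)-(Y_i,Z_i)$ follows from the memorylessness of $P_{YZ|XA}$. Combining Fano's inequality applied to \eqref{eq:reliability_constraint} with the secrecy constraint \eqref{eq:secrecyleakage_constraint}, expanding $I(K;Y^n|W)-I(K;Z^n|W)$ by the chain rule, and applying Csisz\'ar's sum identity to telescope the cross-terms delivers the single-letter key-rate bound \eqref{eq:keyrate}. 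Parallel chain-rule expansions of $\log|\mathcal{W}^{(n)}|\geq H(W)\geq I(W;X^n)$ and of $I(X^n;W,Z^n)$, combined with the Wyner--Ziv-style reconciliation inequality implied by $H(K|W,Y^n)\leq n\epsilon_n$, yield the storage and leakage bounds \eqref{eq:storagerate}--\eqref{eq:leakagerate} once a time-sharing random variable uniform on $[1{:}n]$ is absorbed into $U$. The cardinality bounds follow from two successive applications of the Fenchel--Eggleston--Caratheodory support lemma, first reducing $|\mathcal{V}|$ with $U$ fixed while preserving $P_{XA}$ and the three mutual-information quantities in \eqref{eq:keyrate}--\eqref{eq:leakagerate}, and then reducing $|\mathcal{U}|$ analogously.

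The hardest step is likely the simultaneous single-letterization of the three bounds from a single auxiliary pair in the converse: because $A^n=f_a^{(n)}(W)$ couples every $A_i$ to the entirety of $W$, and because the key term \emph{benefits} from conditioning on $U$ while the storage/leakage terms require the $U$-description cost to \emph{telescope} cleanly against the $V$-term, the Csisz\'ar-sum bookkeeping must be executed in a way compatible with both directions at once. A related delicate point is verifying the Markov chain $U_i-V_i-(X_i,A_i)-(Y_i,Z_i)$ survives the marginalization over $X^{n\setminus i}$ even though the encoder can correlate $W$ with all coordinates of $X^n$; this is precisely why $X^{i-1}$ is inserted into $V_i$ but not into $U_i$, and getting the split right is the crux of the proof.
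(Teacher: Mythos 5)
Your achievability sketch follows essentially the paper's route: an action codebook at rate $I(X;A)$, superposed $U$- and $V$-layers with Wyner--Ziv binning, the key taken as a (sub)bin index of the $V$-codeword, and the same telescoping of the storage rate to $I(X;A)+I(V;X|A,Y)$. The only substantive deviation is that you invoke soft covering/resolvability for secrecy, whereas the paper proves the (weak) secrecy constraint by an entropy count with Fano's inequality applied to the residual index $L'$ given $(W,Z^n)$, together with Lemmas~\ref{lemma:3} and \ref{lemma:4}; note also that resolvability of the $V$-codebook with respect to $Z$ controls $I(K;W,Z^n)$ but does not by itself give the privacy-leakage constraint $\tfrac{1}{n}I(X^n;W,Z^n)\le\Delta+\delta$, which needs its own counting argument (as in the paper's bound leading to \eqref{eq:leakagerate}) and which your sketch asserts rather than derives.

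The genuine gap is in your converse identification: you set $U_i=(W,Y^{i-1},Z_{i+1}^n)$ and $V_i=(U_i,X^{i-1})$, so the secret key $K$ appears in neither auxiliary. The telescoping/Csisz\'ar-sum step does give $n(R_k-\delta_n)\lesssim\sum_{i}\big[I(K;Y_i|U_i,A_i)-I(K;Z_i|U_i,A_i)\big]$ with your $U_i$, but with your $V_i$ there is no way to continue to $\sum_i\big[I(V_i;Y_i|A_i,U_i)-I(V_i;Z_i|A_i,U_i)\big]$: the Markov chain $K-(V_i,A_i)-(Y_i,Z_i)$ fails because $K$ is a function of the entire $X^n$ and hence remains correlated with $(Y_i,Z_i)$ through $X_i$ even given $(W,X^{i-1},Y^{i-1},Z_{i+1}^n)$; and even granting such a chain, a difference of mutual informations is not monotone under enlarging the first argument, since $I(K;Y_i|U_iA_i)-I(K;Z_i|U_iA_i)=\big[I(V_i;Y_i|U_iA_i)-I(V_i;Z_i|U_iA_i)\big]-\big[I(V_i;Y_i|K,U_iA_i)-I(V_i;Z_i|K,U_iA_i)\big]$ and the subtracted bracket has no definite sign. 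The paper avoids this by putting the key \emph{inside} the second auxiliary: $U_i=(W,A^{n\setminus i},Y_{i+1}^n,Z^{i-1})$ and $V_i=(W,K,A^{n\setminus i},Y_{i+1}^n,Z^{i-1})$, so that $I(K;Y_i|U_i,A_i)=I(V_i;Y_i|A_i,U_i)$ (and likewise for $Z_i$) exactly, while the same pair also survives the storage and leakage expansions, whose chain rules naturally produce conditioning sets containing $V_i$. Your mixed convention (past $X$ and $Y$, future $Z$) additionally clashes with the storage-rate expansion---no single chain-rule ordering of $(X^n,Z^n)$ places $X^{i-1}$ and $Z_{i+1}^n$ simultaneously in the conditioning of every term---but that is repairable bookkeeping; the missing $K$ in $V_i$ is the step that actually fails, and the crux of the $U$/$V$ split is ``with $K$ versus without $K$,'' not the insertion of $X^{i-1}$.
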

\begin{IEEEproof}
	Achievability is based on a random coding scheme that consists of superposition of a rate-distortion code for communicating the action sequence and a layered coding with binning for secret-key generation. The converse is based on standard properties of entropy functions. The proof details are given in Appendices~\ref{appsub:achprooftheo1}-\ref{appsub:converseprooftheo1}.
\end{IEEEproof}

\begin{theorem}[\textit{Visible Source, Chosen-secret}]\label{theo:originalchosensecretregion}
	For given $P_X$ and $P_{YZ|XA}$, the key-storage-leakage-cost region $\mathcal{R}_{cs}$ is given as the set of all tuples $(R_k, R_w,\Delta,C)\in \mathbb{R}^4_{+}$ satisfying
	\begin{align}
	&R_k\leq I(V;Y|A,U)- I(V;Z|A,U)\label{eq:orcssecretrate}\\
	&R_w\geq I(X;A,V)-I(U;Y|A)-I(V;Z|A,U)\label{eq:orcsstoragerate}\\
	&\Delta\!\geq\!I(X;A,V,Y)\!+\! I(X;Z|A,U)\! -\! I(X;Y|A,U) \label{eq:orcsleakagerate}
	\end{align}
	for some $P_XP_{A|X}P_{YZ|XA}P_{V|XA}P_{U|V}$ such that $\mathbb{E}[\Gamma(A)]\!\leq\! C$ with $\displaystyle |\mathcal{U}|\!\leq\!|\mathcal{X}||\mathcal{A}|+2$ and $\displaystyle |\mathcal{V}|\!\leq\!(|\mathcal{X}||\mathcal{A}|+2)(|\mathcal{X}||\mathcal{A}|+1)$. 
\end{theorem}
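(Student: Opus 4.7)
The plan is to leverage the close structural relationship between $\mathcal{R}_{cs}$ and $\mathcal{R}_{gs}$. The bounds~(\ref{eq:orcssecretrate}) and~(\ref{eq:orcsleakagerate}) and the cost constraint coincide with their counterparts in Theorem~\ref{theo:gsvsregion}, so essentially only the storage bound~(\ref{eq:orcsstoragerate}) requires new arguments. A direct algebraic check using the Markov chains $V-(X,A)-(Y,Z)$ and $U-V-(X,A,Y,Z)$ yields the identity
\begin{equation*}
I(X;A) + I(V;X|A,Y) + \bigl[I(V;Y|A,U)-I(V;Z|A,U)\bigr] = I(X;A,V) - I(U;Y|A) - I(V;Z|A,U),
\end{equation*}
i.e.\ the chosen-secret storage bound equals the generated-secret storage bound plus the (common) key-rate expression. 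This identity will drive both directions of the proof.

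For achievability, I will build the chosen-secret code from an optimal generated-secret code via a one-time pad. Given a good generated-secret code from Theorem~\ref{theo:gsvsregion} that produces helper data $W'$ and a nearly uniform generated key $K'$ from $X^n$, at key rate $R_k$ and storage rate approaching $I(X;A)+I(V;X|A,Y)$, I set $W = (W', K\oplus K')$, where $K$ is the chosen key (uniform over $\mathcal{K}^{(n)}$ and independent of $X^n$) and $\oplus$ is addition on $\mathcal{K}^{(n)}$ after identifying it with an additive group. The action encoder uses only the $W'$ component to form $A^n$, and the decoder first recovers $\hat{K}'$ from $(W',Y^n)$ as in the generated-secret scheme and then outputs $\hat{K}=\hat{K}'\oplus(K\oplus K')$. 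Reliability, privacy, and cost are inherited verbatim from the underlying code. Secrecy $I(K;W,Z^n)\leq n\delta$ follows by a one-time-pad argument: because $K$ is independent of $(X^n,W',K',Z^n)$ and $K'$ is nearly uniform and nearly independent of $(W',Z^n)$ by the generated-secret secrecy guarantee, $K\oplus K'$ is nearly uniform and nearly independent of $(W',Z^n,K)$. The total storage rate approaches $I(X;A)+I(V;X|A,Y)+R_k$, which matches~(\ref{eq:orcsstoragerate}) by the identity above.

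For the converse I plan to reuse the auxiliary identifications from the converse of Theorem~\ref{theo:gsvsregion}. With those identifications the $R_k$, $\Delta$, and cost bounds follow by repeating the corresponding manipulations, since the reliability, secrecy, privacy, and cost constraints are formally identical and the fact that $W$ now depends on $(X^n,K)$ rather than on $X^n$ alone does not affect those steps. For the storage bound I use that $W = f_2^{(n)}(X^n,K)$ is a function of $(X^n,K)$ and decompose
\begin{equation*}
H(W) = I(W;X^n,K) = I(W;X^n) + I(W;K\mid X^n).
\end{equation*}
To handle the second term I will show $H(K\mid W,X^n)\leq n\epsilon_n$: Fano gives $H(K\mid W,Y^n)\leq n\epsilon_n$, and the Markov chain $K-(X^n,A^n)-Y^n$ (with $A^n$ a function of $W$) together with the memoryless channel $P_{YZ|XA}$ implies $H(K\mid W,X^n) = H(K\mid W,X^n,Y^n)\leq H(K\mid W,Y^n)\leq n\epsilon_n$, so that $I(W;K\mid X^n)\geq nR_k - n\epsilon_n$. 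The first term $I(W;X^n)$ is then single-letterized into $n\bigl[I(X;A)+I(V;X|A,Y)\bigr]$ by exactly the auxiliary identifications and manipulations used in the generated-secret storage converse of Theorem~\ref{theo:gsvsregion} (note that in that setting $H(W) = I(W;X^n)$, so the same calculation applies verbatim here). Combining these with the identity in the opening paragraph yields~(\ref{eq:orcsstoragerate}).

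The main obstacle I anticipate is verifying that the generated-secret single-letterization of $I(W;X^n)$ goes through unchanged in the chosen-secret setting, where $W$ now also carries the independent key $K$; in particular, ensuring that the Markov conditions $V-(X,A)-(Y,Z)$ and $U-V-(X,A,Y,Z)$ survive the auxiliary identifications, and that the Fano slack $n\epsilon_n$ together with the secrecy slack can be absorbed into a single $\delta$. Cardinality bounds for $\mathcal{U}$ and $\mathcal{V}$ follow from the standard Fenchel--Eggleston/Carath\'eodory argument used in Theorem~\ref{theo:gsvsregion}, preserving $P_{X,A}$ together with the three relevant mutual-information functionals, and require no new ideas.
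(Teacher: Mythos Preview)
Your achievability argument via one-time padding is exactly the paper's approach, and your converse arguments for $R_k$, $\Delta$, and the cost carry over from Theorem~\ref{theo:gsvsregion} as you claim. The gap is in the storage converse.

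Your decomposition $H(W)=I(W;X^n)+I(W;K\mid X^n)$ is correct, and your lower bound $I(W;K\mid X^n)\geq nR_k-n\epsilon_n$ is fine. But the claim that $I(W;X^n)$ can be single-letterized into $n\bigl[I(X;A)+I(V;X\mid A,Y)\bigr]$ by ``the same calculation'' as in Theorem~\ref{theo:gsvsregion} does not go through. The generated-secret storage converse is a chain of inequalities that starts from $H(W)$ and manipulates that quantity; the equality $H(W)=I(W;X^n)$ there is a feature of the model (because $W=f_1^{(n)}(X^n)$), not of the calculation. If you rerun those same steps in the chosen-secret model starting from $H(W)$, you still only obtain $H(W)\geq n\bigl[I(X;A)+I(V;X\mid A,Y)\bigr]-n\epsilon_n$, which is too weak by exactly the $nR_k$ you hoped to extract from the second piece of your decomposition. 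Put differently, you need a lower bound on $I(W;X^n)$, but the borrowed calculation lower-bounds $H(W)=I(W;X^n)+H(W\mid X^n)$, and in the chosen-secret model $H(W\mid X^n)$ is nonzero (it can be as large as $nR_k$) and absorbs the bound.

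The paper does not split off $I(W;K\mid X^n)$. Instead it keeps $K$ inside the entropy chain: it inserts the correction terms $H(W\mid A^n,X^n)$ and $H(A^n\mid X^n,Z^n)$ (legitimized by the Markov chain $(K,W)-(A^n,X^n)-(Y^n,Z^n)$), collapses them into $H(W,A^n,K\mid X^n)\geq H(K)$ via the independence of $K$ and $X^n$, and then uses $H(K)\geq H(K\mid W,A^n,Z^n)$ to produce the difference $H(K\mid W,A^n,Z^n)-H(K\mid W,A^n,Y^n)$. This difference single-letterizes, via Csisz\'ar's sum identity and the same auxiliaries $U_i,V_i$, into $\sum_i\bigl[I(V_i;Y_i\mid A_i,U_i)-I(V_i;Z_i\mid A_i,U_i)\bigr]$, which is precisely where the extra key-rate term in~(\ref{eq:orcsstoragerate}) comes from. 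Your anticipated obstacle is therefore real, and the resolution is not to reapply the Theorem~\ref{theo:gsvsregion} calculation to $I(W;X^n)$ but to carry $K$ through the entropy manipulation from the outset.
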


\begin{IEEEproof}
We use the proof of achievability for Theorem~\ref{theo:gsvsregion} and add a one-time padding step. We apply the codebook generation and encoding steps of the generated-secret model to generate the key $K'$ and the helper data $W'$. The embedded chosen key $K$ is uniformly distributed and independent of other random variables. Compared to Theorem~\ref{theo:gsvsregion}, the secret-key and privacy-leakage rate bounds have the same expressions, and the storage rate bound is the sum of the secret-key and storage rate bounds of the generated-secret model. The proof details are given in Appendices~\ref{appsub:achprooftheo2}-\ref{appsub:converseprooftheo2}.
\end{IEEEproof}

\begin{remark}
	\normalfont The results in Theorems \ref{theo:gsvsregion} and ~\ref{theo:originalchosensecretregion} include, as special cases, results for one-round secret-key generation and embedding, respectively, that extend the results in \cite{AhlswedeCsiz}, where there is no privacy constraint on the source sequence, i.e.,  $\Delta = \Delta_{\max}= H(X)$ in Definitions~\ref{def:achievabilitygeneratedsecret} and ~\ref{def:achievabilitychosensecret}, with action-dependent side information. Moreover, Theorem \ref{theo:gsvsregion} can also be seen as an extension of the result in \cite{Kittipong_b} because we additionally capture cost-constrained action-dependent decoder measurements.
	
\end{remark}


\begin{theorem}[\textit{Hidden Source, Generated-secret}] \label{theo:hiddensourcegeneratedsecretregion}
	For given $P_X$, $P_{\widetilde{X}|X}$, and $P_{YZ|XA}$, the key-storage-leakage-cost region $\mathcal{R}_{hgs}$ is given as the set of all tuples $(R_k, R_w,\Delta,C)\in \mathbb{R}^4_{+}$ satisfying
	\begin{align}
	&R_k\leq I(V;Y|A,U)- I(V;Z|A,U)\label{eq:hiddensecretrate}\\
	&R_w\geq I(\widetilde{X};A) + I(V;\widetilde{X}|A,Y) \\
	&\Delta\!\geq\! I(X;A,V,Y)\! +\! I(X;Z|A,U)\!-\! I(X;Y|A,U) \label{eq:hiddenleakagerate}
	\end{align}
	for some $P_XP_{\widetilde{X}|X}P_{A|\widetilde{X}}P_{YZ|XA}P_{V|\widetilde{X}A}P_{U|V}$ such that $\mathbb{E}[\Gamma(A)]\!\leq\!C$ with $\displaystyle |\mathcal{U}|\!\leq\!|\mathcal{\widetilde{X}}||\mathcal{A}|+3$ and $\displaystyle |\mathcal{V}|\!\leq\!(|\mathcal{\widetilde{X}}||\mathcal{A}|+3)(|\mathcal{\widetilde{X}}||\mathcal{A}|+2)$. 
\end{theorem}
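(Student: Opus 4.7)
The proof will closely parallel that of Theorem~\ref{theo:gsvsregion}, since switching from a visible to a hidden source mainly requires substituting $\widetilde{X}$ for $X$ on the encoder side while being careful that privacy-leakage and secrecy expressions are still evaluated against the underlying source $X$. The plan is therefore to reuse the layered random-coding construction from Theorem~\ref{theo:gsvsregion} and to redo the information-theoretic bookkeeping under the factorization $P_XP_{\widetilde{X}|X}P_{A|\widetilde{X}}P_{YZ|XA}P_{V|\widetilde{X}A}P_{U|V}$, which yields the long Markov chain $(U,V)-(\widetilde{X},A)-X-(Y,Z\,|\,A)$.

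For achievability, I would fix an input distribution of the required form and generate three nested codebooks: an action codebook $\{A^n(m)\}$ of rate $I(\widetilde{X};A)$, and conditional codebooks $\{U^n(m,\ell_u)\}$ and $\{V^n(m,\ell_u,\ell_v)\}$ layered on top, indexed so that a portion of the $U$-codeword index forms the secret key $K$ after binning against $(A^n,Z^n)$, and a portion of the $V$-codeword index is transmitted as part of $W$ to allow Wyner--Ziv-style recovery at the decoder using $(A^n,Y^n)$. The encoder, observing $\widetilde{X}^n$, picks $A^n$ by joint typicality, then $U^n$ and $V^n$ similarly, and sends the bin indices as $W$. Reliability follows from standard covering/packing lemmas, the storage bound is $I(\widetilde{X};A)+I(V;\widetilde{X}|A,Y)$ by Wyner--Ziv, and the secret-key rate $I(V;Y|A,U)-I(V;Z|A,U)$ is the usual Csisz\'ar--K\"orner expression evaluated using the eavesdropper side-information $(W,Z^n)$.

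For the converse I would follow the blueprint of the proof of Theorem~\ref{theo:gsvsregion}. Fano's inequality handles reliability and the secrecy constraint handles the exponential uniformity; then the auxiliary identifications $U_i \defeq (K,Z^{i-1},Y_{i+1}^n,A^{n\setminus i})$ and $V_i \defeq (U_i,W)$ (or a standard variant thereof) together with a Csisz\'ar sum-identity step give the single-letter bounds on $R_k$, $R_w$, and $\Delta$. The privacy-leakage term $\frac{1}{n}I(X^n;W,Z^n)$ is rewritten as a per-letter sum and then manipulated using $I(X_i;W,Z^n,Y_{i+1}^n,A^{n\setminus i})=I(X_i;A,U_i,V_i,Z_i)$-type identities, using the hidden-source Markov chain $\widetilde{X}-(A,X)-(Y,Z)$ from the remark after Fig.~\ref{fig:hidden} to collapse the leakage down to the claimed form $I(X;A,V,Y)+I(X;Z|A,U)-I(X;Y|A,U)$. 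Cardinality bounds on $\mathcal{U}$ and $\mathcal{V}$ then follow from a standard Carath\'eodory / Fenchel--Eggleston argument preserving the joint marginal on $(\widetilde{X},A,X,Y,Z)$ together with the three rate functionals, which accounts for the slightly larger bounds $|\widetilde{\mathcal{X}}||\mathcal{A}|+3$ compared to the visible case.

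The main obstacle I anticipate is the converse for privacy leakage. Two features make it more delicate than in Theorem~\ref{theo:gsvsregion}: (i) the auxiliary $V_i$ must be chosen so that the Markov chain $V_i-(\widetilde{X}_i,A_i)-X_i-(Y_i,Z_i)$ holds single-letter, which relies critically on the assumed channel decomposition in the remark; and (ii) the leakage target is stated in terms of $X$ even though the code only accesses $\widetilde{X}$, so the $\widetilde{X}\leftrightarrow X$ channel must be used to pass from $I(X^n;W,Z^n)$ through $I(\widetilde{X}^n;W,Z^n)$-type quantities without slackness, and this requires careful use of the Markov chains on both sides of the hidden source. The achievability side is comparatively routine once the factorization is fixed, modulo the usual care that the binning rates used for $U$ and $V$ are consistent with joint typicality decoding at the decoder after having observed $(A^n,Y^n)$.
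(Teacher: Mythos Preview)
Your high-level plan---reuse the layered random-coding scheme of Theorem~\ref{theo:gsvsregion} with $X$ replaced by $\widetilde{X}$ on the encoder side, then redo the privacy-leakage analysis against the hidden $X$---is exactly what the paper does. But two concrete details in your proposal are reversed and would not go through as written.

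First, in the achievability scheme the secret key must come from the inner ($V$) layer, not the $U$ layer. The paper bins the $V^n$ codewords into bins indexed by $w_v$ (sent as part of $W$) and further subbins indexed by $w_k$, and $K=w_k$; the $U$-layer bin index $w_u$ is purely public helper data. Your description (``a portion of the $U$-codeword index forms the secret key $K$ \ldots and a portion of the $V$-codeword index is transmitted as part of $W$'') has the roles swapped. The key rate $I(V;Y|A,U)-I(V;Z|A,U)$ is precisely the number of $V$-subbins per $V$-bin that the decoder can resolve from $Y^n$ but the eavesdropper cannot from $Z^n$; it cannot be realized at the $U$ layer.

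Second, your converse auxiliaries are also swapped. The paper takes $U_i=(W,A^{n\setminus i},Y_{i+1}^n,Z^{i-1})$ and $V_i=(U_i,K)$, i.e., $W$ sits in $U_i$ and $K$ is what $V_i$ adds. This is forced by the key-rate converse: starting from $H(K)\le H(K|W,Z^n)+n\delta$ (secrecy) and $H(K|W,A^n,Y^n)\le n\epsilon_n$ (Fano), the chain-rule/Csisz\'ar-sum step produces $\sum_i I(K;Y_i|W,A^n,Y_{i+1}^n,Z^{i-1})-I(K;Z_i|W,A^n,Y_{i+1}^n,Z^{i-1})$, which is $\sum_i I(V_i;Y_i|A_i,U_i)-I(V_i;Z_i|A_i,U_i)$ only if $W\in U_i$ and $V_i\setminus U_i=K$. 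With your choice $U_i=(K,\ldots)$, $V_i=(U_i,W)$, the single-letterization of the key rate fails.

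Finally, your anticipated obstacle (ii) is real, but the paper does \emph{not} handle it by passing through $I(\widetilde{X}^n;W,Z^n)$. Instead it bounds $I(X^n;W_a,M,W_v,Z^n|\mathcal{C}_n)$ directly, and the hidden-source structure enters only through a lower bound on $H(V^n,A^n|X^n,\mathcal{C}_n)$ obtained via the Markov chain $(V,A)-\widetilde{X}-X$; this converts $I(\widetilde{X};V,A)$ into $I(X;V,A)$ without slack. Keeping this step explicit is what makes the hidden-source privacy analysis work.
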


\begin{IEEEproof}
Achievability proof is similar to Theorem~\ref{theo:gsvsregion}. We mainly modify the privacy-leakage analysis since the source is now hidden. The proof is given in Appendices~\ref{appsub:achprooftheo3}-\ref{appsub:converseprooftheo3}.
\end{IEEEproof}

\begin{theorem}[\textit{Hidden Source, Chosen-secret}]\label{theo:hiddensourcechosensecretregion}
	For given $P_X$, $P_{\widetilde{X}|X}$, and $P_{YZ|XA}$, the key-storage-leakage-cost region $\mathcal{R}_{hcs}$ is given as the set of all tuples $(R_k, R_w,\Delta,C)\in \mathbb{R}^4_{+}$ satisfying
	\begin{align}
	&R_k\leq I(V;Y|A,U)- I(V;Z|A,U)\label{eq:hiddenchosensecretrate}\\
	&R_w\geq I(\widetilde{X};A,V)-I(U;Y|A)-I(V;Z|A,U)\label{eq:storagehiddensourcechosensecret}\\
	&\Delta\!\geq\! I(X;A,V,Y)\!+\!I(X;Z|A,U)\!-\!I(X;Y|A,U) \label{eq:hiddenchosenleakagerate}
	\end{align}
	for some $P_XP_{\widetilde{X}|X}P_{A|\widetilde{X}}P_{YZ|XA}P_{V|\widetilde{X}A}P_{U|V}$ such that $\mathbb{E}[\Gamma(A)]\!\leq\! C$ with $\displaystyle |\mathcal{U}|\!\leq\!|\mathcal{\widetilde{X}}||\mathcal{A}|+3$ and $\displaystyle |\mathcal{V}|\!\leq\!(|\mathcal{\widetilde{X}}||\mathcal{A}|+3)(|\mathcal{\widetilde{X}}||\mathcal{A}|+2)$. 
\end{theorem}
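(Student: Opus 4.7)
My plan is to mirror the relationship between Theorems~\ref{theo:originalchosensecretregion} and~\ref{theo:gsvsregion}: construct the chosen-secret scheme by overlaying a one-time pad on top of the generated-secret scheme for the hidden source given in Theorem~\ref{theo:hiddensourcegeneratedsecretregion}. Concretely, I would first invoke the achievability of Theorem~\ref{theo:hiddensourcegeneratedsecretregion} to obtain a code that, from $\widetilde{X}^n$, produces helper data $W'$ and a generated key $K'$ of rate $R_k = I(V;Y|A,U)-I(V;Z|A,U)$ satisfying the reliability, uniformity, privacy, and cost conditions. Then, treating the chosen key $K$ as a uniform random variable independent of $(X^n,\widetilde{X}^n,Y^n,Z^n)$, I would let the encoder publish $W = (W', K\xor K')$ and set $A^n = f_a^{(n)}(W')$, so that the action is chosen exactly as in Theorem~\ref{theo:hiddensourcegeneratedsecretregion}. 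The decoder first reconstructs $\hat{K}'$ from $(W',Y^n)$ using the generated-secret decoder, then outputs $\hat{K} = (K\xor K')\xor \hat{K}'$.

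I would then verify the six conditions in the definition of achievability. Reliability follows directly from $\Pr[\hat{K}'\neq K']\le\delta$, uniformity is trivial since $K$ is uniform by construction, and the cost constraint is inherited because $A^n$ depends only on $W'$. For secrecy, because $K$ is independent of $(W',Z^n,K')$ and $K'$ is nearly uniform and nearly independent of $(W',Z^n)$ by Theorem~\ref{theo:hiddensourcegeneratedsecretregion}, a one-time-pad argument gives $I(K;W,Z^n)=I(K;W',K\xor K',Z^n)\le\epsilon_n$. For privacy, $W$ carries no more information about $X^n$ than $(W',K)$ does, and independence of $K$ lets us drop it, yielding the same bound $\Delta\geq I(X;A,V,Y)+I(X;Z|A,U)-I(X;Y|A,U)$. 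For the storage rate, the appended part $K\xor K'$ adds exactly $R_k$ bits, so I obtain
\begin{align}
R_w &\geq I(\widetilde{X};A)+I(V;\widetilde{X}|A,Y)\nonumber\\
&\quad +I(V;Y|A,U)-I(V;Z|A,U).
\end{align}
Using the Markov chain $U-V-(\widetilde{X},A)-(X,Y,Z)$, in particular $I(V;Y|A,\widetilde{X})=0$ so that $I(V;\widetilde{X}|A,Y)=I(V;\widetilde{X}|A)-I(V;Y|A)$, together with $I(V;Y|A)=I(U;Y|A)+I(V;Y|A,U)$, the right-hand side collapses to $I(\widetilde{X};A,V)-I(U;Y|A)-I(V;Z|A,U)$, matching~\eqref{eq:storagehiddensourcechosensecret}.

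For the converse, I would reuse the single-letterization identifications $U_i$ and $V_i$ from the converse of Theorem~\ref{theo:hiddensourcegeneratedsecretregion}, together with a standard time-sharing random variable, and leverage the independence of the chosen $K$ from $\widetilde{X}^n$. The key-rate bound and the privacy bound are derived exactly as in Theorem~\ref{theo:hiddensourcegeneratedsecretregion} because the corresponding inequality chains use only the reliability and secrecy constraints and the Markov structure, both of which are unchanged. The real work is the storage bound: I would start from $n R_w\geq H(W)\geq I(W;\widetilde{X}^n,K)$, split this into $I(W;\widetilde{X}^n)+I(W;K|\widetilde{X}^n)$, and lower-bound the second term using Fano and the secrecy constraint so as to extract the extra $R_k$ term that distinguishes~\eqref{eq:storagehiddensourcechosensecret} from the generated-secret bound.

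The hardest step will be this last identification in the converse. The difficulty is to pull out $-I(U;Y|A)-I(V;Z|A,U)$ cleanly: one must simultaneously use the secrecy condition $I(K;W,Z^n)\le n\delta$ and the reliability condition on $\hat{K}'$ to convert $I(W;K|\widetilde{X}^n)$ into the difference of mutual informations involving $U$ and $V$, while keeping the auxiliaries compatible with the single-letterization used for the key-rate bound. Cardinality bounds on $\mathcal{U}$ and $\mathcal{V}$ are then obtained by the usual Carath\'eodory--Fenchel argument applied to the joint distribution $P_{X\widetilde{X}AVU}$, preserving the marginal of $(X,\widetilde{X},A)$ and the three mutual-information quantities appearing in~\eqref{eq:hiddenchosensecretrate}--\eqref{eq:hiddenchosenleakagerate}.
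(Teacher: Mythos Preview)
Your achievability plan is essentially the paper's: overlay a one-time pad on the hidden generated-secret scheme of Theorem~\ref{theo:hiddensourcegeneratedsecretregion}, verify the six constraints, and collapse the storage expression via the Markov chain $U-V-(\widetilde{X},A)-(X,Y,Z)$. Your handling of the converse for the key rate, the privacy-leakage rate, and the action cost is likewise the same as the paper's (carry over the arguments from Theorem~\ref{theo:hiddensourcegeneratedsecretregion} verbatim, since those chains use only reliability, secrecy, and the Markov structure).

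The gap is in the storage converse. Your proposed decomposition $H(W)=I(W;\widetilde{X}^n)+I(W;K\mid\widetilde{X}^n)$, with the intent of matching the first summand to the generated-secret storage bound and the second to the ``extra $R_k$'' term, does not close with the auxiliaries $V_i=(W,K,A^{n\setminus i},Y_{i+1}^n,Z^{i-1})$ you commit to. The problem is that the generated-secret storage derivation applied to $I(W;\widetilde{X}^n)$ can only produce single-letter terms built from $U_i=(W,A^{n\setminus i},Y_{i+1}^n,Z^{i-1})$, \emph{not} from $V_i$, because $K$ is absent from $I(W;\widetilde{X}^n)$; a simple one-time-pad example $W=\widetilde{X}^n\oplus K$ already gives $I(W;\widetilde{X}^n)=0$ while $\sum_i I(V_i;\widetilde{X}_i\mid A_i,Y_i)$ can be strictly positive. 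Conversely, bounding $I(W;K\mid\widetilde{X}^n)$ by $H(K)-n\epsilon_n$ (via the Markov chain $K-(W,\widetilde{X}^n)-Y^n$ and Fano) yields only $H(K)$, which is \emph{upper}-bounded, not lower-bounded, by $\sum_i\bigl[I(V_i;Y_i\mid A_i,U_i)-I(V_i;Z_i\mid A_i,U_i)\bigr]$. So neither summand separately reaches its intended target, and you would need a nontrivial cross-term accounting that your outline does not supply.

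The paper avoids this by \emph{not} splitting $H(W)$ into those two pieces. Instead it reruns the entire entropy chain from the visible chosen-secret storage converse (Theorem~\ref{theo:originalchosensecretregion}) with every $X$ replaced by $\widetilde{X}$: starting from $H(W)=H(A^n)+H(W\mid A^n)$, it inserts $H(K)$ in one shot via the independence $H(W,A^n,K\mid\widetilde{X}^n)\ge H(K)$, and in the \emph{same} chain produces the difference $H(K\mid W,A^n,Z^n)-H(K\mid W,A^n,Y^n)$, which Csisz\'ar's sum identity then turns into $\sum_i\bigl[I(V_i;Y_i\mid A_i,U_i)-I(V_i;Z_i\mid A_i,U_i)\bigr]$ with the \emph{correct} $V_i$. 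This interleaving is exactly what keeps the auxiliaries compatible across all three bounds.

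A minor point on cardinalities: because the Markov constraint is $U-V-(\widetilde{X},A)$, the support-lemma argument preserves $P_{\widetilde{X}A}$ (costing $|\widetilde{\mathcal{X}}||\mathcal{A}|-1$ continuous functionals) rather than the larger marginal $P_{X\widetilde{X}A}$ you mention; the remaining four functionals are $I(V;Y\mid A,U)-I(V;Z\mid A,U)$, $H(\widetilde{X}\mid U,V,A,Y)$, $H(X\mid U,V,A,Y)$, and $I(X;Z\mid A,U)-I(X;Y\mid A,U)$.
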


\begin{IEEEproof}
We use the proof of achievability for Theorem~\ref{theo:hiddensourcegeneratedsecretregion} and add a one-time padding step. The secret-key and privacy-leakage rate bounds have the same expressions, and the new storage rate bound is the sum of the secret-key and storage rate bounds of the generated-secret model for a hidden source. The proof details are given in Appendices~\ref{appsub:achprooftheo4}-\ref{appsub:converseprooftheo4}. 
\end{IEEEproof}

\begin{remark}
	\normalfont Theorems~\ref{theo:hiddensourcegeneratedsecretregion} and ~\ref{theo:hiddensourcechosensecretregion} can be seen as extensions of the results in \cite{bizimproofarxiv} with the addition of cost-constrained action-dependent measurements at the decoder and correlated side information at the eavesdropper.
\end{remark}

\subsection{Rate Region Comparisons and Discussions}
Consider the compression-leakage-key region given in \cite[Theorem 2]{Kittipong_b} for the generated-secret model and a visible source. We compare this region with the rate region $\mathcal{R}_{gs}$ to illustrate the effects of the cost-constrained action sequence. In particular, we observe that the action $A$ appears as a conditioning random variable in each mutual information term in \cite[Theorem 2]{Kittipong_b}, the new storage and privacy-leakage rate limits are increased by the rate-distortion coding amount of $\displaystyle I(X;A)$, and the probability distribution of $A$ is limited by an expected cost constraint. Therefore, the cost-constrained action sequence $A^n$ brings the possibility of enlarging the rate region, which recovers the rate region in \cite[Theorem 2]{Kittipong_b} by choosing a constant action with fixed cost. The action sequence $A^n$ has similar effects on other rate regions.

The rate region $\mathcal{R}_{gs}$ differs from the rate region $\mathcal{R}_{cs}$ only in the bound for the storage rate. The bound in (\ref{eq:orcsstoragerate}) can be written as $I(X;A) + I(X;V|A,Y)+ I(V;Y|A,U)- I(V;Z|A,U)$ (cf. (\ref{eq:storagerate})), revealing an additional rate that is $I(V;Y|A,U)- I(V;Z|A,U)$ (cf. (\ref{eq:orcssecretrate})) needed to convey the chosen secret to the decoder. Suppose $(R_k,R_w,\Delta,C)\in\mathcal{R}_{cs}$ for given $P_X$ and $P_{YZ|XA}$. Therefore, there exist $A$, $U$, and $V$ such that $U-V-(X,A)-(Y,Z)$ forms a Markov chain as in Theorem~\ref{theo:originalchosensecretregion}. It is straightforward to show that $(R_k,R_w-R_k,\Delta,C)\in\mathcal{R}_{gs}$ for the same $P_X$ and $P_{YZ|XA}$. Similar conclusions follow also for a hidden source.

The bounds for the secret-key and privacy-leakage rates of visible and hidden sources have the same expressions, i.e., 
for the generated-secret model in $\mathcal{R}_{gs}$ and $\mathcal{R}_{hgs}$, and for the chosen-secret model in $\mathcal{R}_{cs}$ and $\mathcal{R}_{hcs}$, respectively. However, the storage-rate limits of different source models are different. Moreover, the Markov chain constraints and the cardinality bounds on the auxiliary random variables are different for visible and hidden source models. The rate regions therefore differ significantly, which can result in unnoticed secrecy leakages and reliability reductions if the wrong source model is used for a system design (see \cite{bizimproofarxiv}).

\section{Example}\label{sec:examples}
 We want to illustrate an achievable rate region for cost-constrained action-dependent secret-key generation from a visible source. We first define the scenario where a PUF in an internet-of-things (IoT) device is used for key generation so that only a mobile device with access to the key can control the IoT device. We then show an achievable rate region for this scenario by proving specific convexity results. These convexity results significantly simplify the encoder design by decreasing the cardinality of the auxiliary random variable.  
 
 Suppose $X$ is binary and uniformly distributed, the channel $P_{A|X}$ is a binary symmetric channel (BSC) with crossover probability $\alpha$, and the channels $P_{Y|AX}(.|a,.)$ are BSCs with crossover probabilities $p_a$ for $a\!=\!0,1$. Suppose the eavesdropper has degraded side information and the channel $P_{Z|Y}$ is a BSC with crossover probability $p$. In practice, quantized fine variations of ring oscillator (RO) outputs follow these source and channel models. The effects of voltage and temperature variations can also be suppressed by a legitimate user by applying additional post-processing steps to the RO outputs \cite{bizimtemperature}. Classic crossover probabilities for the BSCs $P_{Y|AX}(\cdot|a,\cdot)$ under ideal environmental conditions are $p_a\!=\!0.03$ and $0.05$ for $a=0,1$, where, e.g., $a=0$ corresponds to the case that $X^n$ is sent through the $P_{Y|AX}(\cdot|0,\cdot)$ channel. 
 
 Suppose the attacker has access to a noisy version $Z^n$ of the RO outputs $X^n$ disturbed by environmental variations in addition to noise. A classic crossover probability for one of the BSCs $P_{Z|AX}(\cdot|a,\cdot)$ is $p^{\prime}\!=\!0.15$ \cite{bizimtemperature}. We thus choose $p_0\!=\!0.03$, $p_1\!=\!0.05$, $p\!=\!0.1277$ so that $p*p_0\!=\!0.15\!=\!p^{\prime}$ and $\displaystyle p*p_1\!=\!0.1649$. We also consider the cost of $\Gamma(0)\!=\!0.5$ units for $a\!=\!0$ and $\Gamma(1)\!=\!0.3$ units for $a\!=\!1$ since obtaining a more reliable channel requires more post-processing steps, which results in higher cost.

 Suppose the crossover probability $\alpha$ of the BSC $P_{A|X}$ is $0.2$. It is therefore more likely that the input $X\!=\!1$ is sent through a channel that is stochastically degraded with respect to the channel through which the input $X\!=\!0$ is sent because $p_1\!>\!p_0$. This is the case if, e.g., a one-bit quantizer is applied to RO outputs, where the bit $0$ is extracted if the output value is less than the mean over all ROs and the bit $1$ otherwise. RO outputs decrease with increasing temperature. Therefore, the error probability of the channel through which the input bit 0 is sent is smaller than the bit 1 is sent if the ambient temperature is greater than the temperature assumed for system design.
 
 We now illustrate an achievable rate region for the RO PUF problem defined above by proving convexity of a function used for entropy calculations. First, fix $V\!=\!(A,X)$ so that the rate region is
\begin{align}
&R_k\leq I(X;Y|A,U)-I(X;Z|A,U)\nonumber\\
&R_w\geq I(X;A)+H(X|A,Y)\nonumber\\
&\Delta\geq H(X)-(I(X;Y|A,U)-I(X;Z|A,U)) \label{eq:examplerateregion}
\end{align}
such that $\displaystyle U-(A,X)-(Y,Z)$ forms a Markov chain and $\displaystyle C\geq \mathbb{E}[\Gamma(A)]$. The optimization problem of achieving boundary points in (\ref{eq:examplerateregion}) is equivalent to
\begin{align}
\min_{P_{AX|U}}H(Z|A,U) \text{ for a fixed } H(Y|A,U)\!=\!\eta
\end{align}
for all $0\!\leq\!\eta\!\leq\!1$, which is a similar problem to Mrs. Gerber's lemma (MGL) \cite{WZ}. Denote the conditional probabilities $P_{AX|U}(ax|i)=\hat{x}_{i,ax}$ and the probabilities $P_U(i)\!=\!u_i$ for $i\!=\!1,2,\ldots,|\mathcal{U}|$. Due to $P_{AX}$, we obtain the constraints
\begin{align}
&\sum_{i=1}^{|\mathcal{U}|}u_i\hat{x}_{i,01} =\sum_{i=1}^{|\mathcal{U}|}u_i\hat{x}_{i,10} = \frac{\alpha}{2},\label{eq:PAXcondition1}\\
&\sum_{i=1}^{|\mathcal{U}|}u_i\hat{x}_{i,00} =\sum_{i=1}^{|\mathcal{U}|}u_i\hat{x}_{i,11} = \frac{1\!-\!\alpha}{2}\label{eq:PAXcondition2}.
\end{align}
To fix $H(Y|A,U)$, it therefore suffices to consider 
\begin{align}
\hat{x}_{i,01} =\frac{1}{2}\!-\!\hat{x}_{i,00}, \qquad \hat{x}_{i,10} =\frac{1}{2}\!-\!\hat{x}_{i,11}\label{eq:firstPAXsimpl12}.
\end{align}
Define the functions
\begin{align}
f(\hat{x}_{i,00},\hat{x}_{i,11})&\!=\!\Bigg[H_b\Bigg(p_0*\!\frac{2\hat{x}_{i,00}}{1-2(\hat{x}_{i,11}-\hat{x}_{i,00})}\Bigg)\nonumber\\
&\!+\!H_b\Bigg(\!p_1*\!\frac{2\hat{x}_{i,11}}{1\!-\!2(\hat{x}_{i,00}-\hat{x}_{i,11})}\!\Bigg)\!\Bigg],\label{eq:fdef}
\end{align}
\begin{align}
g(\hat{x}_{i,00},\hat{x}_{i,11})&\!=\!\Bigg[H_b\Bigg(p*p_0*\!\frac{2\hat{x}_{i,00}}{1-2(\hat{x}_{i,11}-\hat{x}_{i,00})}\Bigg)\nonumber\\
&\!+\!H_b\!\Bigg(\!p\!*p_1\!*\!\frac{2\hat{x}_{i,11}}{1\!-\!2(\hat{x}_{i,00}\!-\!\hat{x}_{i,11})}\!\Bigg)\!\Bigg]\label{eq:gdef}.
\end{align}

Using (\ref{eq:firstPAXsimpl12}), (\ref{eq:fdef}), and (\ref{eq:gdef}), we obtain
\begin{align}
H(Y|A,U) = \sum_{i=1}^{|\mathcal{U}|}&u_i\,\frac{1}{2}\, f(\hat{x}_{i,00},\hat{x}_{i,11}),\label{eq:hYgivenAU}
\end{align}
\begin{align}
H(Z|A,U) = &\sum_{i=1}^{|\mathcal{U}|}u_i\,\frac{1}{2}\, g(\hat{x}_{i,00},\hat{x}_{i,11})\label{eq:hZgivenAU}.
\end{align}
Define an inverse function $\displaystyle f^{-1}(\nu)\!=\!(\bar{x},\bar{x})$ for all $\nu\!\in\![H_b(p_0)\!+\!H_b(p_1),\, 2]$ and $\bar{x}\!\in\![0,\, 0.5]$. It suffices to replace $f(\hat{x}_{i,00},\hat{x}_{i,11})$ and $g(\hat{x}_{i,00},\hat{x}_{i,11})$, respectively, with
\begin{align}
&\bar{f}(\bar{x})=f\Big(\frac{\bar{x}}{2},\frac{\bar{x}}{2}\Big)\label{eq:functionfbardef}\\
&\bar{g}(\bar{x})=g\Big(\frac{\bar{x}}{2},\frac{\bar{x}}{2}\Big) \label{eq:functiongbardef}
\end{align}
to fix (\ref{eq:hYgivenAU}) and (\ref{eq:hZgivenAU}) separately. 

\begin{lemma}\label{lem:uniqueness}
	There is a unique $\bar{x}$ in the interval $[0,\,0.5]$ for which $H(Y|A,U)\!=\!\frac{1}{2}\bar{f}(\bar{x})$.
\end{lemma}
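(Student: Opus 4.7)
The plan is to show that $\bar{f}$ is continuous and strictly increasing on $[0,0.5]$, from which the unique-preimage property under $\frac{1}{2}\bar{f}(\cdot)$ follows immediately.

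First, I would simplify the expression for $\bar{f}$. Substituting $\hat{x}_{i,00}=\hat{x}_{i,11}=\bar{x}/2$ into (\ref{eq:fdef}) makes the differences $\hat{x}_{i,11}-\hat{x}_{i,00}$ and $\hat{x}_{i,00}-\hat{x}_{i,11}$ appearing in the two denominators vanish, so each denominator reduces to $1$, and the arguments of $H_b(\cdot)$ collapse to $p_0*\bar{x}$ and $p_1*\bar{x}$, respectively. This yields the clean form
\[
\bar{f}(\bar{x}) = H_b(p_0*\bar{x}) + H_b(p_1*\bar{x}).
\]

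Next, I would differentiate. Using $\frac{d}{d\bar{x}}(p*\bar{x}) = 1-2p$ and the standard identity $H_b'(q) = \log\frac{1-q}{q}$, the derivative splits as
\[
\bar{f}'(\bar{x}) = (1-2p_0)\log\frac{1-p_0*\bar{x}}{p_0*\bar{x}} + (1-2p_1)\log\frac{1-p_1*\bar{x}}{p_1*\bar{x}}.
\]
Since the example specifies $p_0,p_1\in(0,0.5)$, the prefactors $1-2p_a$ are strictly positive, and for $\bar{x}\in[0,0.5)$ the affine combinations $p_a*\bar{x}$ remain strictly below $0.5$, making each logarithm strictly positive. Hence $\bar{f}'(\bar{x})>0$ on $[0,0.5)$, so $\bar{f}$ is strictly increasing on $[0,0.5]$.

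Finally, strict monotonicity and continuity imply that $\bar{f}$ maps $[0,0.5]$ bijectively onto $[H_b(p_0)+H_b(p_1),\,2]$, so $H(Y|A,U)=\frac{1}{2}\bar{f}(\bar{x})$ admits a unique solution $\bar{x}\in[0,0.5]$ whenever the right-hand side lies in the image, which is exactly the range over which the inverse function was introduced. I do not anticipate any serious obstacle; the only condition to verify is that both BSC crossover probabilities are strictly below $0.5$, which is guaranteed by the physical RO PUF parameter choices in the example. The same template will then apply verbatim to $\bar{g}$ (with $p*p_0$ and $p*p_1$ in place of $p_0$ and $p_1$), both of which still lie in $(0,0.5)$ for the chosen $p,p_0,p_1$.
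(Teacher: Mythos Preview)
Your approach is essentially the paper's: both reduce to the strict monotonicity of $\bar{f}$ on $[0,0.5]$, which you establish via an explicit derivative computation while the paper simply asserts it. The one piece you leave open is \emph{existence}: the lemma claims a solution always exists, not merely ``whenever the right-hand side lies in the image.'' The paper closes this by observing $H_b(p_0)+H_b(p_1)\leq 2H(Y|A,U)\leq 2H(Y)\leq 2$; the lower bound comes from $H(Y|A,U)\geq H(Y|A,X)=\tfrac12\bigl(H_b(p_0)+H_b(p_1)\bigr)$ (using the Markov chain $U-(A,X)-Y$ and the fact that $A$ is uniform in this example), and the upper bound from $Y$ being binary. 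Add that one line and your proof is complete.
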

\begin{proof}
	The function $\bar{f}(\bar{x})$ is strictly increasing from $H_b(p_0)\!+\!H_b(p_1)$ to $2$ in the interval $[0,0.5)$ and we have $H_b(p_0)\!+\!H_b(p_1)\!\leq\!2H(Y|A,U)\!\leq\!2H(Y)\!\leq\!2$.
\end{proof}

\begin{lemma}\label{lem:convexity}
	Define $\tilde{p}'\!=\!\min\{p',1\!-\!p'\}$ for some $0\!\leq\! p'\!\leq\! 1$. If $\tilde{p}*\tilde{p}_0\geq\tilde{p}_1$ and $\tilde{p}*\tilde{p}_1\geq\tilde{p}_0$, the function $\bar{g}(f^{-1}(\nu))$ is convex in $\nu$ for $\nu\!\in\![H_b(p_0)\!+\!H_b(p_1),\,2]$.
\end{lemma}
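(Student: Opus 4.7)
The plan is to translate convexity of $G(\nu):=\bar{g}(\bar{f}^{-1}(\nu))$ into a pointwise algebraic inequality by parametrizing the curve by $\bar{x}\in[0,0.5]$, on which $\bar{f}$ is strictly increasing by Lemma~\ref{lem:uniqueness}. The chain rule gives
\[
G''(\nu)=\frac{\bar{g}''(\bar{x})\bar{f}'(\bar{x})-\bar{g}'(\bar{x})\bar{f}''(\bar{x})}{[\bar{f}'(\bar{x})]^{3}},
\]
so, since $\bar{f}'>0$, convexity reduces to $\bar{g}''\bar{f}'-\bar{g}'\bar{f}''\geq 0$ for every $\bar{x}\in[0,0.5]$. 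With the abbreviations $\alpha_i=1-2p_i$, $\beta=1-2p$, $u_i=p_i*\bar{x}$, $v_i=p*u_i$, and $\phi=H_b$, a direct expansion of both products and grouping by powers of $\alpha_0,\alpha_1$ yields
\[
\bar{g}''\bar{f}'-\bar{g}'\bar{f}''=\beta\sum_{i,j\in\{0,1\}}\alpha_i^{2}\alpha_j\bigl[\beta\phi'(u_j)\phi''(v_i)-\phi'(v_j)\phi''(u_i)\bigr],
\]
so I would establish nonnegativity of each of the four summands.

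For the two diagonal terms ($i=j$), the bracket $\beta\phi'(u_i)\phi''(v_i)-\phi'(v_i)\phi''(u_i)$ is exactly the numerator (up to a positive factor) of the second derivative of $h_p(\eta):=H_b(p*H_b^{-1}(\eta))$, so Mrs.\ Gerber's lemma (MGL) applied to the BSC with crossover $p$ yields nonnegativity. For the two cross terms ($i\neq j$), MGL is not directly applicable because the pair $(u_j,v_i)$ is not of the form $(u,p*u)$. I would first reduce, without loss, to $p,p_0,p_1\leq 0.5$ via $\tilde{p}=\min\{p,1-p\}$ together with the identities $H_b(y)=H_b(1-y)$ and $p*(1-y)=1-p*y$, so that the hypothesis becomes simply $p*p_0\geq p_1$ and $p*p_1\geq p_0$. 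Combined with the linear identity $v_i-\tfrac{1}{2}=\beta(u_i-\tfrac{1}{2})$, these inequalities are equivalent to $\beta^{2}s_0^{2}\leq s_1^{2}$ and $\beta^{2}s_1^{2}\leq s_0^{2}$, where $s_i:=u_i-\tfrac{1}{2}$; these are quantitative bounds that limit the gap between the two intermediate noise probabilities. Rewriting each cross bracket using $\rho(y):=y(1-y)\ln\tfrac{1-y}{y}$ and $\tau(y):=y(1-y)$ reduces its nonnegativity to the inequality $[\rho(v_i)/\rho(u_i)]\cdot[\tau(v_j)\tau(u_i)/(\tau(u_j)\tau(v_i))]\geq\beta$, where the first factor is $\geq\beta$ by MGL applied diagonally.

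The main obstacle will be the cross-term estimate. The $\tau$-ratio simplifies to $1\pm(s_0^{2}-s_1^{2})(1-\beta^{2})/[4\tau(u_j)\tau(v_i)]$ and is strictly less than~$1$ for one of the two cross terms (since $p_0<p_1$ forces $|s_0|>|s_1|$), so the MGL slack $\rho(v_i)/\rho(u_i)\geq\beta$ alone is insufficient, and one must exploit the quantitative hypotheses $\beta^{2}s_{i}^{2}\leq s_{1-i}^{2}$ to close the inequality. The remaining technical work is to turn these hypotheses either into a sharper, hypothesis-dependent refinement of MGL that controls $\rho(p*u)/[(1-2p)\rho(u)]$ from below, or into a combined analysis of the two cross terms in which the surplus in one absorbs the deficit in the other; this balancing step, which is exactly what the conditions $\tilde{p}*\tilde{p}_0\geq\tilde{p}_1$ and $\tilde{p}*\tilde{p}_1\geq\tilde{p}_0$ are calibrated for, is the technical heart of the argument.
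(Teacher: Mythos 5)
Your setup matches the paper's up to the reduction to $\bar{g}''(\bar{x})\bar{f}'(\bar{x})-\bar{f}''(\bar{x})\bar{g}'(\bar{x})\ge 0$ and the observation that the diagonal (same-$\tilde{p}_a$) products are nonnegative by Mrs.~Gerber's lemma, but there is a genuine gap: the cross-term step, which is exactly where the hypotheses $\tilde{p}*\tilde{p}_0\ge\tilde{p}_1$ and $\tilde{p}*\tilde{p}_1\ge\tilde{p}_0$ must enter, is not proved. You correctly diagnose that with your grouping (pairing $\phi''(v_i)\phi'(u_j)$ against $\phi''(u_i)\phi'(v_j)$) one of the two cross brackets is not controlled by the MGL slack alone, and you then leave the ``balancing step'' or a ``hypothesis-dependent refinement of MGL'' as future work. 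That unexecuted step is the technical heart of the lemma, so the proposal does not constitute a proof.

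The paper closes this step by grouping the cross products differently: it pairs $\phi''(v_a)\phi'(u_b)$ with $\phi'(v_a)\phi''(u_b)$ (same argument $v_a$ in the $\bar{g}$-factors and same $u_b$ in the $\bar{f}$-factors), so that nonnegativity of each cross term separately is equivalent to (\ref{eq:convexitylaststepineq}), i.e., to $l(\tilde{p}*\tilde{p}_a)\le l(\tilde{p}_b)$ with
\begin{align*}
l(\hat{p}) = \frac{1-2\hat{p}}{(\hat{p}*\bar{x})(1-\hat{p}*\bar{x})\log\Big(\frac{1-\hat{p}*\bar{x}}{\hat{p}*\bar{x}}\Big)}.
\end{align*}
The paper then shows $l(\hat{p})$ is decreasing on $[0,0.5]$ (via convexity of $l$ and $l'(0.5)=0$), after which the hypothesis $\tilde{p}*\tilde{p}_a\ge\tilde{p}_b$ immediately gives $l(\tilde{p}*\tilde{p}_a)\le l(\tilde{p}_b)$; no balancing between the two cross terms is needed, and the required ``refinement of MGL'' is simply this hypothesis-independent monotonicity (note $l(\hat{p})\le l(0)$ recovers the usual MGL-type bound $\rho(\hat{p}*\bar{x})\ge(1-2\hat{p})\rho(\bar{x})$ in your notation). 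If you wish to salvage your route, regroup the cross terms as the paper does: in your notation the $(a,b)$ cross bracket becomes $\beta\alpha_a\phi''(v_a)\phi'(u_b)-\alpha_b\phi''(u_b)\phi'(v_a)$, whose nonnegativity is precisely $\beta\alpha_a/\rho(v_a)\le\alpha_b/\rho(u_b)$, i.e., the monotonicity statement above evaluated at $\tilde{p}*\tilde{p}_a$ and $\tilde{p}_b$; with your pairing, by contrast, individual nonnegativity of each bracket can fail, which is why you ran into the balancing obstacle.
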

\begin{proof}
	The functions $\bar{f}(\bar{x})$ and $\bar{g}(\bar{x})$ are symmetric with respect to $p_0\!=\!\frac{1}{2}$, $p_1\!=\!\frac{1}{2}$, and $p\!=\!\frac{1}{2}$. It thus suffices to prove the convexity for $\displaystyle 0\!\leq\!\tilde{p}_0,\tilde{p}_1,\tilde{p}\!\leq\!0.5$. Define $ \bar{f}^{\prime}(\bar{x})\! =\! \frac{d}{d\bar{x}}\bar{f}(\bar{x})$. $\bar{g}(f^{-1}(\nu))$ is convex in $\nu$ if \cite{bizimCNS}
	\begin{align}
	\frac{\partial^{2}}{\partial {\nu^{2}}} \left(\bar{g}(f^{-1}(\nu))\right)\!=\!\frac{1}{\bar{f}^{\prime}(\bar{x})}\frac{\partial}{\partial {\bar{x}}} \left(\frac{\bar{g}^{\prime}(\bar{x})}{\bar{f}^{\prime}(\bar{x})}\right)\!\geq\!0 \label{eq:seceq} 
	\end{align}
	for all $\bar{x}\!\in[0, 0.5]$. Note that $H_b(\cdot)$ is an increasing function for $\bar{x}\!\in[0, 0.5]$, so $\bar{f}^{\prime}(\bar{x})\!\geq\!0$ for all $\bar{x}\!\in[0, 0.5]$. It thus suffices to show that $\frac{\partial}{\partial {\bar{x}}} \left(\frac{\bar{g}^{\prime}(\bar{x})}{\bar{f}^{\prime}(\bar{x})}\right)\!\geq\!0$, i.e., 
	\begin{align}
	\bar{g}^{\prime\prime}(\bar{x})\bar{f}^{\prime}(\bar{x})\!-\!\bar{f}^{\prime\prime}(\bar{x})\bar{g}^{\prime}(\bar{x})\!\geq\!0 \label{eq:convexitynextstep}.
	\end{align}
	
	The functions $\bar{f}(\bar{x})$ and $\bar{g}(\bar{x})$ consist of two parts as $H_b(\tilde{p}_a*\bar{x})$ and $H_b(\tilde{p}*\tilde{p}_a*\bar{x})$, respectively, for $a\!=\!0,1$. It is shown in \cite{WZ} that $\displaystyle H_b(\tilde{p}*H_b^{-1}(\nu))$ is convex in $0\!\leq\!\nu\!\leq\!1$ for any $\tilde{p}\!\in\![0,\,0.5]$, so the terms in (\ref{eq:convexitynextstep}) that consist of the multiplications of the parts with the same $\tilde{p}_a$ provide positive contributions. It thus suffices to find a set of $\tilde{p}_0$ and $\tilde{p}_1$ values that satisfies
	\begin{align}
	&\frac{1\!-\!2(\tilde{p}\!*\!\tilde{p}_a)}{(\tilde{p}\!*\!\tilde{p}_a\!*\!\bar{x})(1\!-\!\tilde{p}\!*\!\tilde{p}_a\!*\!\bar{x})\log\Big(\frac{1-\tilde{p}*\tilde{p}_a*\bar{x}}{\tilde{p}*\tilde{p}_a*\bar{x}}\Big)}\nonumber\\
	&\leq\!\frac{1\!-\!2\tilde{p}_{b}}{(\tilde{p}_{b}\!*\!\bar{x})(1\!-\!\tilde{p}_{b}\!*\!\bar{x})\log\Big(\frac{1-\tilde{p}_{b}*\bar{x}}{\tilde{p}_{b}*\bar{x}}\Big)}\label{eq:convexitylaststepineq}
	\end{align}
	where $b\!=\!1\!-\!a$ for $a\!=\!0,1$. Define the function 
	\begin{align}
	l(\hat{p}) = \frac{1-2\hat{p}}{(\hat{p}\!*\!\bar{x})(1\!-\!\hat{p}\!*\!\bar{x})\log\Big(\frac{1-\hat{p}*\bar{x}}{\hat{p}*\bar{x}}\Big)}
	\end{align}
	for $0\!\leq\!\hat{p},\bar{x}\!\leq\!0.5$. It is straightforward to prove that $l(\hat{p})$ is a decreasing function by showing that $l(\hat{p})$ is convex and $ l^{\prime}(0.5)\!=\!0$. The inequality in (\ref{eq:convexitylaststepineq}) is thus satisfied if $\tilde{p}\!*\!\tilde{p}_a\!\geq \tilde{p}_b$ for $a\!=\!0,1$. This proves the convexity.
\end{proof}

We use the convexity property for channels satisfying the assumptions in Lemma~\ref{lem:convexity} to give an achievable lower bound for $H(Z|A,U)$ when $H(Y|A,U)$ is fixed.

\begin{lemma}\label{lem:lowerbound}
	Suppose $\bar{g}(f^{-1}(\nu))$ is convex in $\nu$. With the assumptions given above, we have 
	\begin{align}
	H(Z|A,U)\!\geq\!\frac{1}{2}\,\bar{g}(f^{-1}(2H(Y|A,U)))\label{eq:jensenresult}.
	\end{align}
\end{lemma}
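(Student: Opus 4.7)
The plan is to apply Jensen's inequality to the convex function $\bar{g}\circ f^{-1}$ (guaranteed by Lemma~\ref{lem:convexity}) on top of the representations in (\ref{eq:hYgivenAU})--(\ref{eq:hZgivenAU}). Since the diagonal parameterization in (\ref{eq:functionfbardef})--(\ref{eq:functiongbardef}) already expresses each summand through a single scalar $\bar{x}_i$, the lemma reduces to a routine estimate once the two one-variable parameterizations of $H(Y|A,U)$ and $H(Z|A,U)$ are coupled through the common index $i$.

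Concretely, the first step is to invoke Lemma~\ref{lem:uniqueness} for each $i=1,\ldots,|\mathcal{U}|$: the map $\bar{f}$ is strictly increasing on $[0,0.5]$ with range $[H_b(p_0)+H_b(p_1),\,2]$, so $\bar{x}_i=f^{-1}(\bar{f}(\bar{x}_i))$, and therefore (\ref{eq:hZgivenAU}) can be rewritten as
\begin{align*}
H(Z|A,U)=\frac{1}{2}\sum_{i=1}^{|\mathcal{U}|} u_i\,(\bar{g}\circ f^{-1})(\bar{f}(\bar{x}_i)).
\end{align*}
By (\ref{eq:hYgivenAU}), the arguments $\bar{f}(\bar{x}_i)$ lie in the interval $[H_b(p_0)+H_b(p_1),\,2]$, and their convex combination equals $2H(Y|A,U)$. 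Consequently, Jensen's inequality applied to the convex function $\bar{g}\circ f^{-1}$ (Lemma~\ref{lem:convexity}) gives
\begin{align*}
\sum_{i=1}^{|\mathcal{U}|} u_i\,(\bar{g}\circ f^{-1})(\bar{f}(\bar{x}_i))
\geq (\bar{g}\circ f^{-1})\!\left(\sum_{i=1}^{|\mathcal{U}|} u_i\,\bar{f}(\bar{x}_i)\right)
= \bar{g}(f^{-1}(2H(Y|A,U))),
\end{align*}
and multiplying both sides by $1/2$ yields (\ref{eq:jensenresult}).

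The only subtlety worth double-checking is that the argument $2H(Y|A,U)$ stays within the interval $[H_b(p_0)+H_b(p_1),\,2]$ on which $\bar{g}\circ f^{-1}$ is both well-defined and convex; this is immediate because each $\bar{f}(\bar{x}_i)$ already belongs to that interval and convex combinations preserve it. No additional symmetrization or auxiliary construction beyond the diagonal parameterization and the two preceding lemmas is needed, so the main ``work'' of the lemma is really packaged inside Lemma~\ref{lem:convexity}; the step here is the clean application of Jensen's inequality.
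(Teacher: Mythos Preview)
Your proof is correct and follows essentially the same approach as the paper: both write $H(Z|A,U)=\tfrac{1}{2}\sum_i u_i\,\bar{g}\bigl(f^{-1}(\bar{f}(\bar{x}_i))\bigr)$, apply Jensen's inequality to the convex function $\bar{g}\circ f^{-1}$, and then identify the resulting argument $\sum_i u_i\,\bar{f}(\bar{x}_i)$ with $2H(Y|A,U)$ via (\ref{eq:hYgivenAU}). The only minor remark is that convexity of $\bar{g}\circ f^{-1}$ is already a \emph{hypothesis} of this lemma, so invoking Lemma~\ref{lem:convexity} is unnecessary here.
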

\begin{proof}
	Using Jensen's inequality, we have
	\begin{align*}
	&H(Z|A,U)\!=\!\sum_{i=1}^{|\mathcal{U}|} u_i\,\frac{1}{2}\,\bar{g}(f^{-1}(\bar{f}(\bar{x}_i)))\nonumber\\
	&\!\geq\!\frac{1}{2}\,\bar{g}\Big(f^{-1}\big(\sum_{i=1}^{|\mathcal{U}|} u_i\bar{f}(\bar{x}_i)\big)\Big)\!=\!\frac{1}{2}\,\bar{g}\big(f^{-1}(2H(Y|A,U))\big).\qedhere
	\end{align*}
\end{proof}

\begin{lemma}\label{lem:result}
	Consider the problem setup defined above and the region in (\ref{eq:examplerateregion}). The BSCs $P_{AX|U}(a,\cdot|\cdot)$ with the same crossover probability $\bar{x}\!\in\![0,\,0.5]$ when $P_{AX|U}(a,0|\cdot)\!+\!P_{AX|U}(a,1|\cdot)\!=\!\frac{1}{2}$ achieve the region that satisfies equality in (\ref{eq:jensenresult}) if (\ref{eq:firstPAXsimpl12}), $\tilde{p}*\tilde{p}_0\!\geq\!\tilde{p}_1$, and $\tilde{p}*\tilde{p}_1\!\geq\!\tilde{p}_0$ are satisfied.
\end{lemma}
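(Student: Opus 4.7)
The plan is to show by direct substitution that the BSC-structured distribution proposed in the lemma makes each term of the Jensen sum used in Lemma~\ref{lem:lowerbound} identical, thereby achieving equality in (\ref{eq:jensenresult}). Because the hypotheses of Lemma~\ref{lem:result} include $\tilde p*\tilde p_0\ge\tilde p_1$ and $\tilde p*\tilde p_1\ge\tilde p_0$, Lemma~\ref{lem:convexity} delivers convexity of $\nu\mapsto\bar g(f^{-1}(\nu))$, so Lemma~\ref{lem:lowerbound} is applicable and yields
\[
H(Z|A,U)=\tfrac12\sum_i u_i\,\bar g\bigl(f^{-1}(\bar f(\bar x_i))\bigr)\ \ge\ \tfrac12\,\bar g\!\bigl(f^{-1}(2H(Y|A,U))\bigr).
\]
Equality in this Jensen step is immediate whenever $\bar f(\bar x_i)$ is the same for every $i$ in the support of $P_U$, so it suffices to exhibit a feasible $P_{AX|U}$ with this property.

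Next, I would verify that setting $\hat x_{i,00}=\hat x_{i,11}=\bar x/2$ for every $i$---which, via (\ref{eq:firstPAXsimpl12}), forces $\hat x_{i,01}=\hat x_{i,10}=(1-\bar x)/2$---produces exactly the desired structure. By construction every $\bar x_i$ coincides with the common value $\bar x\in[0,0.5]$, so each term of the Jensen sum collapses to $\tfrac12\bar g(\bar x)$, and equality in (\ref{eq:jensenresult}) then follows by direct substitution into (\ref{eq:hYgivenAU})-(\ref{eq:hZgivenAU}), both sides evaluating to $\tfrac12\bar g(\bar x)$. A short calculation of $P_{X|AU}(\cdot|a,i)$ for this distribution confirms that it is a BSC whose crossover depends only on $\bar x$ and is common across both values of $a$ and every $i$, matching the lemma's description of ``BSCs with the same crossover $\bar x$''.

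Feasibility is then straightforward: the identities $\hat x_{i,00}+\hat x_{i,01}=\hat x_{i,10}+\hat x_{i,11}=1/2$ enforce the uniform conditional $P_{A|U}(\cdot|i)$ required by (\ref{eq:firstPAXsimpl12}), and the marginal constraints (\ref{eq:PAXcondition1})-(\ref{eq:PAXcondition2}) dictated by the fixed joint $P_{AX}$ are met by an admissible choice of $\bar x$ together with $P_U$. The main obstacle is essentially bookkeeping---carefully tracking the four parametrizations $\hat x_{i,ax}$, $\bar f$, $\bar g$, and $f^{-1}$ through the chain of reductions---but once the substitution is carried out the equality in (\ref{eq:jensenresult}) is immediate, establishing that the proposed BSC family achieves the boundary of the region described by (\ref{eq:examplerateregion}).
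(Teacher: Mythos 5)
Your overall strategy is the same as the paper's: invoke Lemma~\ref{lem:convexity} for convexity, Lemma~\ref{lem:lowerbound} for the Jensen bound, and then obtain equality in (\ref{eq:jensenresult}) by making every term of the Jensen sum identical. The gap is in your concrete instantiation. Choosing $\hat{x}_{i,00}=\hat{x}_{i,11}=\bar{x}/2$ \emph{for every} $i$ makes the conditional law of $(A,X)$ given $U=i$ the same for all $i$ (indeed, as you note, $P_{X|AU}(\cdot|a,i)$ does not depend on $i$), i.e., $U$ is independent of $(A,X)$. Then the marginal constraints (\ref{eq:PAXcondition1})--(\ref{eq:PAXcondition2}) leave no freedom to $P_U$ at all: they force $\bar{x}/2=(1-\alpha)/2$, i.e., $\bar{x}=1-\alpha=0.8$ for the example's $\alpha=0.2$, which lies outside $[0,0.5]$; and even invoking the symmetry $H_b(q)=H_b(1-q)$ this degenerate choice only reproduces the single point $H(Y|A,U)=H(Y|A)$, $H(Z|A,U)=H(Z|A)$. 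The lemma, however, claims tightness of (\ref{eq:jensenresult}) for the whole family of boundary points of (\ref{eq:examplerateregion}), i.e., for every admissible prescribed value $H(Y|A,U)=\frac{1}{2}\bar{f}(\bar{x})$ (this is what the paper uses to obtain equality in (\ref{theo:boundRk})--(\ref{theo:bounddelta})). So the sentence ``the marginal constraints \ldots are met by an admissible choice of $\bar{x}$ together with $P_U$'' is exactly the step that fails: with your structure there is no admissible choice.

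The repair is the Mrs.~Gerber-type construction that the lemma statement is actually describing: the BSC must act from (a relabeling of) $u$ to $X$, so the conditional must depend on $u$. Concretely, keep (\ref{eq:firstPAXsimpl12}) (hence $P_{A|U}(a|\cdot)=\frac{1}{2}$) and set $\hat{x}_{i,00}=\hat{x}_{i,11}=(1-\bar{x})/2$ for one group of $u$-values and $=\bar{x}/2$ for the other; equivalently $X=A\oplus U'\oplus N$ with $N\sim\mathrm{Bern}(\bar{x})$ and $U'$ a binary relabeling of $U$, so each $P_{X|AU}(\cdot|a,u)$ is a BSC with crossover $\bar{x}$ up to complementation. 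Since $\bar{f}(\bar{x})=\bar{f}(1-\bar{x})$ and $\bar{g}(\bar{x})=\bar{g}(1-\bar{x})$, every term of (\ref{eq:hYgivenAU})--(\ref{eq:hZgivenAU}) still equals $\frac{1}{2}\bar{f}(\bar{x})$, resp.\ $\frac{1}{2}\bar{g}(\bar{x})$, so your Jensen-equality argument goes through, while the weights $P_U(0)=\frac{1-\alpha-\bar{x}}{1-2\bar{x}}$, $P_U(1)=1-P_U(0)$ satisfy (\ref{eq:PAXcondition1})--(\ref{eq:PAXcondition2}) for every $\bar{x}\in[0,\alpha]$, which (by Lemma~\ref{lem:uniqueness} and $H(Y|A,U)\le H(Y|A)$) covers all admissible values of $H(Y|A,U)$. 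With this correction your proof fills in the equality step that the paper's proof asserts tersely; without it, the construction proves only one degenerate boundary point and is infeasible for the stated range of $\bar{x}$.
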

\begin{proof}
	Consider the boundary points in (\ref{eq:examplerateregion}) that depend on $U$. Using Lemma~\ref{lem:lowerbound}, we obtain
	\begin{align}
	&R_k\!\leq H(Y|A,U)\!-\!H(Y|A,X)\nonumber\\
	&\qquad\!-\!\frac{1}{2}\,\bar{g}\big(f^{-1}(2H(Y|A,U))\big)\!+\!H(Z|A,X)\label{theo:boundRk},\\
	&\Delta\!\geq H(X)\!-\!H(Y|A,U)\!+\!H(Y|A,X)\nonumber\\
	&\qquad\!+\!\frac{1}{2}\,\bar{g}\big(f^{-1}(2H(Y|A,U))\big)\!-\!H(Z|A,X)\label{theo:bounddelta}\
	\end{align}
	where we use Lemma~\ref{lem:convexity} for the convexity requirement and Lemma~\ref{lem:uniqueness} to show that the inverse function $f^{-1}(\cdot)$ is a bijective mapping. Equalities in (\ref{theo:boundRk}) and (\ref{theo:bounddelta}) are achieved by BSCs $P_{AX|U}(a,\cdot|\cdot)$ with crossover probability $0\!\leq\!\bar{x}\!\leq\!0.5$, defined in Lemma~\ref{lem:uniqueness}, when $P_{AX|U}(a,0|\cdot)\!+\!P_{AX|U}(a,1|\cdot)\!=\!\frac{1}{2}$. 
\end{proof}

\begin{remark}
	\normalfont One can show that the lower bound in (\ref{eq:jensenresult}) can be improved for $H(Z|A,U)$ given in (\ref{eq:hZgivenAU}) that is a function of a general $g(\hat{x}_{i,00},\hat{x}_{i,11})$, although this lower bound is tight for the function $\bar{g}(\bar{x})$. 
\end{remark}

For the RO PUF problem with the source and channel parameters given above, we obtain $R_w\!\geq\!0.4731$ bits/source-bit and $C\!\geq\!0.4$ units since $P_{AXYZ}$ is fixed.
The boundary points for $R_k$ and $\Delta$ sum up to $H(X)\!=\!1$ bits, which determines the trade-off between the secret-key and privacy-leakage rates for this example. The maximum $R_k$ achievable by using Lemma~\ref{lem:result} is $R_k^*\!=\!0.3876$ bits/source-bit, achieved with $\Delta\!\geq\!0.6124$ bits/source-bit.

\section{Conclusion}\label{sec:conclusion}
We derived the key-storage-leakage-cost regions for a visible source with the generated- or chosen-secret model when a cost-constrained action sequence controls the source measurements during authentication. Correlated side information at the eavesdropper is also considered as a realistic assumption especially for biometric identifiers. The achievability proof of the generated-secret model involves layered random binning. We bound the secret key generated by the generated-secret model to a chosen secret key for the proof of the chosen-secret model. We illustrated achievable key-storage-leakage-cost regions with an example, where used channel and source parameters were motivated by realistic authentication scenarios that use secret keys generated from RO PUFs.  

Multiple source measurements during enrollment are studied by considering a hidden source with noisy measurements at the encoder. We also derived the key-storage-leakage-cost regions for such a hidden source. The achievability proofs of the hidden source models also involve the same layered random binning as of the visible source models, but this time the noiseless identifier outputs are replaced with the noisy outputs at the encoder and the privacy-leakage rate is measured with respect to the hidden source. 

Comparisons showed that the rate regions for the two source models differ significantly due to different rate limits for the storage rate, and different Markov chain constraints and cardinality bounds on the auxiliary random variables. In future work, we will consider adaptive decoder measurements with causal actions that depend on the helper data and previous decoder measurements, which might improve the rate regions.

\section*{Acknowledgment}
The authors thank the Associate Editor and anonymous reviewers for their valuable suggestions that helped to improve the paper. Specifically, we thank an anonymous reviewer who suggested the future work problem above.

\appendix

\section*{Proofs of Theorems~\ref{theo:gsvsregion}-~\ref{theo:hiddensourcechosensecretregion}}\label{sec:proofs}
Based on the condition that all sequences are jointly typical with high probability, we bound some conditional entropy terms of interest with single letter expressions using the following two lemmas (see \cite{Kittipong_a} for proofs). 

\begin{lemma}\label{lemma:3}
	Let $(X^{n},A^{n})$ be jointly typical with high probability and $Z^n \ \text{i.i.d.} \sim P_{Z|XA}$, we have $H(Z^{n}|X^{n},A^{n}) \geq n(H(Z|X,A) - \delta_{\epsilon})$, where $\delta_{\epsilon} \rightarrow 0$ as $\epsilon \rightarrow 0$ and $\epsilon \rightarrow 0$ as $n \rightarrow \infty$.
\end{lemma}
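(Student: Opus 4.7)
The plan is to unfold $H(Z^n|X^n,A^n)$ as an expectation over realizations of $(X^n,A^n)$, use conditional memorylessness of $Z^n$ given $(X^n,A^n)$ to reduce the inner conditional entropy to a sum of single-letter entropies, and then convert that sum into $n\,H(Z|X,A)$ up to a vanishing slack via the letter-typicality count bounds.

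First I would write
\begin{equation*}
H(Z^n|X^n,A^n)=\sum_{x^n,a^n}P_{X^nA^n}(x^n,a^n)\,H(Z^n|X^n=x^n,A^n=a^n).
\end{equation*}
Because $P_{Z|XA}$ is memoryless, conditioned on $(X^n,A^n)=(x^n,a^n)$ the components $Z_1,\ldots,Z_n$ are independent with $Z_i\sim P_{Z|XA}(\cdot|x_i,a_i)$, so the inner entropy factors as $\sum_{i=1}^{n}H(Z|X=x_i,A=a_i)$, which in type form equals $\sum_{(x,a)}N(x,a|x^n,a^n)\,H(Z|X=x,A=a)$, where $N(x,a|x^n,a^n)$ is the empirical count of $(x,a)$ in $(x^n,a^n)$.

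Next I would restrict the outer sum to $\mathcal{T}_\epsilon^{n}(X,A)$, discarding the non-negative contribution from atypical realizations. For any $(x^n,a^n)\in\mathcal{T}_\epsilon^{n}(X,A)$, the defining letter-typicality inequality gives $N(x,a|x^n,a^n)\geq n(1-\epsilon)P_{XA}(x,a)$ for every $(x,a)$ (trivially when $P_{XA}(x,a)=0$). Combined with the identity $H(Z|X,A)=\sum_{(x,a)}P_{XA}(x,a)H(Z|X=x,A=a)$, this yields $H(Z^n|X^n=x^n,A^n=a^n)\geq n(1-\epsilon)H(Z|X,A)$ uniformly over typical $(x^n,a^n)$.

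Finally, invoking the hypothesis $\Pr[(X^n,A^n)\in\mathcal{T}_\epsilon^{n}(X,A)]\geq 1-\tau$ with $\tau\to 0$ as $n\to\infty$, I would conclude
\begin{equation*}
H(Z^n|X^n,A^n)\geq (1-\tau)(1-\epsilon)\,n\,H(Z|X,A)\geq n\bigl(H(Z|X,A)-\delta_\epsilon\bigr),
\end{equation*}
with $\delta_\epsilon\leq(\tau+\epsilon)\log|\mathcal{Z}|$, which vanishes because $\mathcal{Z}$ is finite and both $\tau$ and $\epsilon$ tend to zero. The only substantive ingredient is the letter-typicality lower bound on the empirical counts; the main thing to be careful about is packaging the two small parameters (the $\epsilon$ slack from letter-typicality and the atypicality probability $\tau$) into a single $\delta_\epsilon$ with the claimed behaviour, which is straightforward here thanks to the uniform bound $H(Z|X,A)\leq\log|\mathcal{Z}|$.
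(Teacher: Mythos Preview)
Your proof is correct and is the standard argument for this type of bound. Note, however, that the paper does not actually supply its own proof of this lemma: it simply states the lemma and refers the reader to \cite{Kittipong_a} for the proof. There is therefore nothing to compare against in the paper itself; your write-up is precisely the kind of elementary derivation one would expect the cited reference to contain.
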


\begin{lemma}\label{lemma:4}
	Let $(A^{n},U^{n},Z^n)$ be jointly typical with high probability and $\mathcal{C}_n$ represent a random codebook. Then, $H(Z^n|A^n,U^n,\mathcal{C}_n)\! \leq\! n(H(Z|A,U) \!+\! \delta_{\epsilon})$, where $\delta_{\epsilon} \rightarrow 0$ as $\epsilon \rightarrow 0$ and $\epsilon \rightarrow 0$ as $n \rightarrow \infty$.
\end{lemma}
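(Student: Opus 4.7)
The plan is to prove the bound via a standard typical-set cardinality argument, with the codebook $\mathcal{C}_n$ playing essentially no role beyond what is already captured by the hypothesis of high-probability joint typicality of $(A^n, U^n, Z^n)$.

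First, I would introduce the indicator random variable $F = \mathbb{1}\{(A^n,U^n,Z^n) \in \mathcal{T}_{\epsilon}^{n}(A,U,Z)\}$. By hypothesis, $\Pr[F = 0] \leq \epsilon$ for all sufficiently large $n$. Adjoining $F$ to the entropy and applying the chain rule gives
\begin{align*}
H(Z^n|A^n,U^n,\mathcal{C}_n) &\leq H(Z^n,F|A^n,U^n,\mathcal{C}_n) \\
&\leq H_b(\epsilon) + H(Z^n|F,A^n,U^n,\mathcal{C}_n),
\end{align*}
where the final inequality uses that $F$ is binary with $\Pr[F=0]\leq\epsilon$, so $H(F|\cdot)\leq H_b(\epsilon)$.

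Next I would split the conditional entropy over the two values of $F$. When $F=1$, the sequence $Z^n$ is constrained to lie in the conditional typical set $\mathcal{T}_{\epsilon}^{n}(Z|a^n,u^n)$ whose cardinality is upper bounded by $2^{n(H(Z|A,U)+\delta_{\epsilon})}$ by standard properties of letter-typical sequences \cite[Ch.~3]{masseylecturenotes}. When $F=0$, I would use the trivial bound $H(Z^n|\cdot) \leq n\log|\mathcal{Z}|$. This yields
\begin{align*}
H(Z^n|F,A^n,U^n,\mathcal{C}_n) &\leq \Pr[F=1]\cdot n(H(Z|A,U)+\delta_{\epsilon}) \\
&\quad + \Pr[F=0]\cdot n\log|\mathcal{Z}| \\
&\leq n(H(Z|A,U) + \delta_{\epsilon}) + n\epsilon\log|\mathcal{Z}|.
\end{align*}
Combining the two displays and redefining $\delta_\epsilon$ to absorb the $H_b(\epsilon)/n$ and $\epsilon\log|\mathcal{Z}|$ terms (both of which vanish as $\epsilon\to 0$) gives the claimed bound $H(Z^n|A^n,U^n,\mathcal{C}_n) \leq n(H(Z|A,U)+\delta_\epsilon)$.

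There is no real obstacle here: the only subtle point is the presence of the codebook $\mathcal{C}_n$ in the conditioning, but since the bound on $|\mathcal{T}_\epsilon^n(Z|a^n,u^n)|$ depends only on the realizations $(a^n,u^n)$ and not on how they were generated, conditioning additionally on $\mathcal{C}_n$ does not invalidate any step; the role of $\mathcal{C}_n$ is merely to ensure (via the hypothesis of the lemma and the random coding construction used in the main achievability proofs) that joint typicality holds with high probability. One minor bookkeeping item is to make sure $\delta_\epsilon$ is chosen uniformly over all realizations of $\mathcal{C}_n$, but this is immediate because the typical-set cardinality bound is a property of the distribution $P_{AUZ}$ and the blocklength, not of the particular codebook drawn.
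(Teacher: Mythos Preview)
Your proposal is correct and follows the standard typical-set cardinality argument: adjoin the typicality indicator, split on its value, and bound the entropy on the typical event by the log-cardinality of the conditional typical set. The paper does not reproduce a proof of this lemma but instead defers to \cite{Kittipong_a}; your argument is precisely the classical one such a reference would contain, so there is nothing substantive to compare.
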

\section*{Proof of Theorem~\ref{theo:gsvsregion}}\label{appsec:proofstheo1}
\subsection{Proof of Achievability}\label{appsub:achprooftheo1}
The proof follows from standard random coding arguments where we show the existence of a code that satisfies the key, storage, privacy-leakage rates, and expected cost constraints. 

\emph{Codebook generation}: Fix $P_{A|X}P_{V|XA}P_{U|V}$ such that $\mathbb{E}[\Gamma(A)]\leq C/(1+\epsilon)$.

\begin{itemize}
	\item Randomly and independently generate $2^{n(I(X;A)+\delta_{\epsilon})}$ codewords $a^{n}(w_a)$  according to  $\prod_{i=1}^{n}P_A(a_i(w_a))$ for $w_a \in [1:2^{n(I(X;A)+\delta_{\epsilon})}]$.
	\item For each $w_a$, randomly and conditionally independently generate $2^{n(I(U;X|A)+\delta_{\epsilon})}$ codewords $u^{n}(w_a,m)$ each according to $\prod_{i=1}^{n}P_{U|A}(u_i|a_i(w_a))$ for $m \in [1:2^{n(I(U;X|A)+\delta_{\epsilon})}]$, and distribute them uniformly at random into $2^{n(I(U;X|A)-I(U;Y|A)+2\delta_{\epsilon})}$ bins $b_U(w_u)$ for $w_u \in [1:2^{n(I(U;X|A)-I(U;Y|A)+2\delta_{\epsilon})}]$. Without loss of generality, we can identify the index $m = (w_u,m')$ for some $m' \in [1:2^{n(I(U;Y|A)-\delta_{\epsilon})}]$.
	\item For each $(w_a,m)$ pair, randomly and conditionally independently generate $2^{n(I(V;X|A,U)+\delta_{\epsilon})}$ codewords $v^{n}(w_a,m,l)$ each according to $\prod_{i=1}^{n}P_{V|UA}(v_i|u_i(w_a,m),a_i(w_a))$ for $ l \in [1:2^{n(I(V;X|A,U)+\delta_{\epsilon})}]$, and distribute them uniformly at random into $2^{n(I(V;X|A,U)-I(V;Y|A,U)+3\delta_{\epsilon})}$ bins $b_V(m,w_v)$ for $w_v \in [1:2^{n(I(V;X|A,U)-I(V;Y|A,U)+3\delta_{\epsilon})}]$. Furthermore, for each bin, we divide codewords $v^n$ into $2^{n(I(V;Y|A,U)-I(V;Z|A,U)-\delta_{\epsilon})}$ equal-sized subbins, each denoted by a subbin index $w_k$. Without loss of generality, we can identify the index $l = (w_v,w_k,l')$ for some $l' \in [1:2^{n(I(V;Z|A,U)-\delta_{\epsilon})}]$.
\end{itemize}
The codebook is revealed to all parties.

\emph{Encoding}:
\begin{itemize}
	\item For a given source sequence $x^{n}$, the encoder looks for a $a^{n}(w_a)$ which is jointly typical with $x^n$. Since there are more than $2^{nI(X;A)}$ codewords $a^n$, by the covering lemma \cite{Elgamalbook}, there exists such an $a^{n}$ with high probability. If there are more than one, we choose one uniformly at random and send the corresponding index $w_a$ to the decoder.
	\item The encoder then looks for a $u^{n}(w_a,m)$ that is jointly typical with $(x^n,a^{n})$. Since there are more than $2^{nI(U;X|A)}$ codewords $u^n$, by the covering lemma, there exists such a $u^{n}$ with high probability. If there are more than one, we choose one uniformly at random and send the corresponding bin index $w_u$ to the decoder.
	\item Again, the encoder looks for a $v^{n}(w_a,m,l)$ which is jointly typical with $(x^n,a^{n},u^n)$. Since there are more than $2^{nI(V;X|A,U)}$ codewords $v^n$, by the covering lemma, there exists such a $v^{n}$ with high probability. If there are more than one, we choose one uniformly at random and send the corresponding bin index $w_v$ to the decoder. The secret key $k$ is chosen to be the subbin index $w_k$ of the chosen codeword $v^n$.
\end{itemize}
This gives the total storage rate of $I(X;A)+I(U;X|A)-I(U;Y|A)+I(V;X|A,U)-I(V;Y|A,U)+6\delta_{\epsilon}= I(X;A)+I(V;X|A,Y)+6\delta_{\epsilon}$. Once the action sequence is chosen, action-dependent side information $(y^n,z^n)$ is generated as the output of the memoryless channel $P_{Y,Z|X,A}$.

\emph{Decoding}:
\begin{itemize}
	\item Upon receiving the indices $(w_a, w_u,w_v)$ 
	and side information $y^n$, the decoder looks for the unique $u^{n}$ which is jointly typical with $(y^{n}, a^n)$. Since there are less than $2^{nI(U;Y|A)}$ sequences in the bin $b_U(w_u)$, by the packing lemma \cite{Elgamalbook}, it will find the unique and correct $u^{n}$ with high probability.
	\item Then, the decoder looks for the unique $v^{n}$ which is jointly typical with $(y^{n}, a^n, u^n)$. Since there are less than $2^{nI(V;Y|A,U)}$ sequences in the bin $b_V(m,w_v)$, by the packing lemma, it will find the unique and correct $v^{n}$ with high probability. The decoder puts out $\hat{k}$ as the subbin index $\hat{w}_k$ of the decoded codeword $v^n$ which will be the correct one with high probability.
\end{itemize}

\emph{Action Cost}: Since each action sequence $a^n$ is in the typical set with high probability, by the typical average lemma \cite{Elgamalbook}, the expected cost constraint is satisfied.

\emph{Privacy-leakage Rate}: The information leakage  averaged over the random codebook $\mathcal{C}_n$ can be bounded as
\begin{align}
	& I(X^{n};W_a,W_u,W_v,Z^{n}|\mathcal{C}_n)\nonumber\\
	&\leq I(X^{n};W_a,M,W_v,Z^{n}|\mathcal{C}_n)\nonumber\\
	&= H(X^n|\mathcal{C}_n)-H(X^{n},W_a,M,W_v,Z^{n}|\mathcal{C}_n)\nonumber\\&\qquad +H(W_a,M,W_v|\mathcal{C}_n)+H(Z^n|W_a,M,W_v,\mathcal{C}_n)\nonumber\\
	&= -H(Z^{n}|X^{n},\mathcal{C}_n)-H(W_a,M,W_v|X^{n},Z^{n},\mathcal{C}_n)\nonumber\\&\qquad +H(W_a,M,W_v|\mathcal{C}_n)+H(Z^n|W_a,M,W_v,\mathcal{C}_n)\nonumber\\
	&\overset{(a)}{\leq}- H(Z^{n}|X^{n},A^{n}) +H(W_a,M,W_v|\mathcal{C}_n)\nonumber\\&\qquad+H(Z^n|W_a,M,W_v,\mathcal{C}_n)\nonumber\\
	&\overset{(b)}{\leq} -H(Z^{n}|X^{n},A^{n})+H(W_a|\mathcal{C}_n)+H(M|\mathcal{C}_n)\nonumber\\&\qquad+H(W_v|\mathcal{C}_n)+H(Z^n|A^n,U^n,\mathcal{C}_n)\nonumber\\
	& \overset{(c)}{\leq} n[- H(Z|X,A)+I(X;A)+I(U;X|A)+5\delta_{\epsilon}\nonumber\\&\quad +(I(V;X|A,U)-I(V;Y|A,U))+H(Z|A,U)]\nonumber\\
	& \!\overset{(d)}{=}\! n[I(X;A,V,Y)\!-\!I(X;Y|A,U)\!+\!I(X;Z|A,U)\!+\!\delta_{\epsilon}']\nonumber \\
	&\leq n[\Delta +\delta_{\epsilon}']\label{eq:achprivacyleakagegeneratedsecret}
\end{align}
if $\Delta \geq I(X;A,V,Y)-(I(X;Y|A,U)-I(X;Z|A,U))$,
where $(a)$ follows from the facts that conditioning reduces entropy, 
and that $Z^n-(X^n,A^n)-\mathcal{C}_n$ forms a Markov chain, $(b)$ follows because given the codebook, $(A^n,U^n)$ are functions of $(W_a,M)$, $(c)$ follows from the codebook generation, from the memoryless properties of the source and the side information channel, from Lemma \ref{lemma:3} with which we bound the term $H(Z^{n}|X^{n},A^{n})$, and from Lemma \ref{lemma:4} with which we bound the term $H(Z^n|A^n,U^n,\mathcal{C}_n)$, and $(d)$ follows from the Markov chain $(Y,Z)-(X,A)-V-U$.

\emph{Secrecy-leakage Rate}: The secrecy-leakage rate averaged over the random codebook $\mathcal{C}_n$ can be bounded as
\begin{align}
	& I(W_k;W_a,W_u,W_v,Z^{n}|\mathcal{C}_n)\nonumber\\
	&\leq H(W_k|\mathcal{C}_n) -H(W_k|W_a,M,W_v,Z^{n},\mathcal{C}_n)\nonumber\\
	&= H(W_k|\mathcal{C}_n) -H(W_a,M,L,Z^{n}|\mathcal{C}_n) \nonumber\\&\qquad+ H(L'|W_a,M,W_v,W_k,Z^n,\mathcal{C}_n) \nonumber\\ &\qquad+ H(W_a,M,W_v,Z^{n}|\mathcal{C}_n)\nonumber\\
	&\overset{(a)}{\leq} H(W_k|\mathcal{C}_n) -H(A^n,U^n,V^n,Z^n|\mathcal{C}_n) + n\epsilon_n \nonumber\\ &\qquad+ H(W_a|\mathcal{C}_n) + H(M|\mathcal{C}_n)+H(W_v|\mathcal{C}_n) \nonumber\\ &\qquad+H(Z^n|A^n,U^n,\mathcal{C}_n)\nonumber\\
	&\overset{(b)}{\leq} H(W_k|\mathcal{C}_n) -H(A^n,U^n,V^n,Z^n|\mathcal{C}_n) + n\epsilon_n \nonumber\\ &\qquad + n(I(X;A)+I(U;X|A)+I(V;X|A,U,Y)\nonumber\\ &\qquad+H(Z|A,U) +\delta_{\epsilon}') \nonumber
	\\&\overset{(c)}{\leq} n\delta_{\epsilon}^{(2)}\label{eq:achsecrecyleakagetheo1}
\end{align}
where $(a)$ follows from the fact that given the codebook, $(A^n,U^n)$ are functions of $(W_a,M)$ and $V^n$ of $(W_a,M,L)$, and from Fano's inequality where given $(W_a,M,W_v,W_k,Z^n)$, the codeword $V^n$ and thus $L'$ can be decoded correctly with high probability since there are less than $2^{nI(V;Z|A,U)}$ remaining $V^n$, $(b)$ follows from the codebook generation and Lemma~\ref{lemma:4}, and $(c)$ follows from the codebook generation,
from the bound on $H(A^n,U^n,V^n,Z^n|\mathcal{C}_n)$ which is shown below,  and from the Markov chain $U-V-(X,A)-(Y,Z)$.
\begin{align*}
	&H(A^n,U^n,V^n,Z^n|\mathcal{C}_n) \\
	&\!\overset{(a)}{=} H(A^n,U^n,V^n,X^n|\mathcal{C}_n) + H(Z^n|X^n,A^n)\\ &\qquad-H(X^n|A^n,U^n,V^n,Z^n,\mathcal{C}_n)\\
	&\!\geq\! H(X^n)\!+\! H(Z^n|X^n,A^n)\!-\!H(X^n|A^n,U^n,V^n,Z^n,\mathcal{C}_n)\\
	&\!\overset{(b)}{\geq} n(H(X) + H(Z|X,A) - H(X|A,U,V,Z) - \delta_{\epsilon}')
\end{align*}
where $(a)$ follows from the Markov chain $Z^n-(X^n,A^n)-(U^n,V^n,\mathcal{C}_n)$ and $(b)$ follows from Lemma~\ref{lemma:3} and from a bound on $H(X^n|A^n,U^n,V^n,Z^n,\mathcal{C}_n)$ which can be derived similarly as in Lemma~\ref{lemma:4}. 

\emph{Secret-key Rate}: The key rate averaged over the random codebook $\mathcal{C}_n$ can be bounded as follows.
\begin{align}
	&H(W_k|\mathcal{C}_n)\geq H(W_k|W_a,M,W_v,L',\mathcal{C}_n) \nonumber\\
	&\overset{(a)}{\geq}  H(A^n,U^n,V^n|\mathcal{C}_n) - H(W_a|\mathcal{C}_n) - H(M|\mathcal{C}_n)\nonumber\\ &\quad-H(W_v|\mathcal{C}_n) -H(L'|\mathcal{C}_n)\nonumber\\
	&\overset{(b)}{\geq} n(I(X;A,U,V) -I(X;A)-I(U;X|A)\nonumber\\ &\quad-I(V;X|A,U,Y)-I(V;Z|A,U)-\delta_{\epsilon}')\nonumber\\
	&= n(I(V;Y|A,U)\!-\!I(V;Z|A,U)\!-\!\delta_{\epsilon}')\geq n(R_k-\delta_{\epsilon}')\label{eq:achsecretkeyrateanalysistheo1}
\end{align}
if $R_k \leq I(V;Y|A,U)-I(V;Z|A,U)$, where $(a)$ follows from the fact that given the codebook $(A^n,U^n,V^n)$ are functions of $(W_a,M,L)$, $(b)$ follows from the codebook generation, from the bound $P_{A^nU^nV^n}(a^n,u^n,v^n) = \sum_{x^n \in \mathcal{T}_{\epsilon}^{(n)}(X|a^n,u^n,v^n)} P_{X^n}(x^n) \leq 2^{-n(I(X;A,U,V)-\delta_{\epsilon})}$, and from the Markov chain $V-(X,A,U)-Y$. 

Using the random coding argument, we have that a tuple $(R_k,R_w,\Delta,C) \in \mathbb{R}_{+}^4$ that satisfies (\ref{eq:keyrate})-(\ref{eq:leakagerate}) for some $P_{A|X}$, $P_{V|XA}$, and $P_{U|V}$ such that $\mathbb{E}[\Gamma(A)] \leq C$ is achievable.

\subsection{Proof of Converse}\label{appsub:converseprooftheo1}
Let $U_{i}\triangleq (W,A^{n \setminus i},Y_{i+1}^{n},Z^{i-1})$ and $V_{i}\triangleq (W,K,A^{n \setminus i},Y_{i+1}^n,Z^{i-1})$, which satisfy the Markov chain $U_i-V_i-(A_i,X_i)-(Y_i,Z_i)$ for all $i=1,2,\ldots,n$. For any achievable tuple $(R_k,R_w,\Delta,C) $, we have the following.

\emph{Storage Rate}: We obtain
\begin{align*}
	&n(R_w+\delta_n) \geq \log|\mathcal{W}^{(n)}| \geq H(W)\\
	&\overset{(a)}{=} H(W)+H(A^{n}|W)= H(A^{n}) + H(W|A^{n})\\
	&\geq [H(A^{n})-H(A^{n}|X^{n},Z^{n})]+ [H(W|A^{n},Y^{n}) \\&\qquad -H(W|A^{n},X^{n},Y^{n},Z^{n}) ] \\
	&\!=\!H(X^{n},Z^{n})\!-\!H(X^{n},Z^{n}|A^{n})\!+\!H(X^{n},Z^{n}|A^{n},Y^{n})\\&\qquad-H(X^{n},Z^{n}|A^{n},Y^{n},W)\\
	&\!=\! H(X^{n})\!+\! H(Z^{n}|X^{n})\!-\!H(Y^{n}|A^{n})\\&\qquad \!+\!H(Y^{n},Z^{n}|X^{n},A^{n})-H(Z^{n}|X^{n},A^{n})\\
	&\qquad -H(X^{n},Z^{n}|A^{n},Y^{n},W,K)\\&\qquad-I(X^n,Z^n;K|A^n,Y^n,W)\\
	&\!\geq\!   H(X^{n}) \!-\!H(Y^{n}|A^{n})\!+\!H(Y^{n},Z^{n}|X^{n},A^{n})\\&\qquad \!-\!H(X^{n},Z^{n}|A^{n},Y^{n},W,K)\!-\!H(K|A^n,Y^n,W)\\
	&\overset{(b)}{\geq}  \sum_{i=1}^{n} H(X_{i}) - H(Y_i|A_{i})+H(Y_i,Z_i|X_{i},A_{i}) \\&\qquad-H(X_{i},Z_i|A^{n},Y^{n},W,K,X^{i-1},Z^{i-1})-n\epsilon_n\\
	&\overset{(c)}{\geq} \sum_{i=1}^{n} H(X_{i}) - H(Y_i|A_{i})+H(Y_i|X_{i},A_{i},Z_i) \\&\qquad+H(Z_i|X_{i},A_{i})-H(X_{i},Z_i|A_{i},Y_i,V_{i})-n\epsilon_n\\
	&\geq \sum_{i=1}^{n} I(X_{i};A_{i})+I(V_{i};X_{i}|A_{i},Y_i)-n\epsilon_n
\end{align*}
where $(a)$ follows from the deterministic action encoder, $(b)$ follows from Fano's inequality, and $(c)$ follows from the definition of $V_i$. 

\emph{Privacy-leakage Rate}: We have
\begin{align*}
	&n(\Delta + \delta_n) \geq I(X^{n};W,Z^{n})
	\\&= I(X^{n};W)+I(X^{n};Z^{n}|W)\\
	& \overset{(a)}{=} I(X^{n};W,A^{n})+I(X^{n};Z^{n}|W,A^{n})\\
	&=\!  H(X^n\!)\!-\!H(X^{n}|W,K,A^{n},Y^{n}\!) \!-\!I(X^n;K|W,A^n,Y^n\!)\\&\qquad-I(X^{n};Y^{n}|W,A^{n})+I(X^{n};Z^{n}|W,A^{n})\\
	& \overset{(b)}{\geq} \sum_{i=1}^{n} H(X_i)-H(X_{i}|W,K,A^{n},Y^{n},X^{i-1})  \\&\qquad -H(Y_i|W,A^{n},Y_{i+1}^{n})+ H(Y_i|X_{i},A_{i}) \\&\qquad +H(Z_i|W,A^{n},Z^{i-1})-H(Z_i|X_{i},A_{i})-n\epsilon_n 
				\end{align*}
				\begin{align*}
	& \overset{(c)}{=} \sum_{i=1}^{n} H(X_i)-H(X_{i}|W,K,A^{n},Y^{n},X^{i-1},Z^{i-1}) \\&\qquad -I(X_{i};Y_i|A_{i})+ H(Y_i|A_{i}) +I(X_{i};Z_i|A_{i})\\ &\qquad - H(Z_i|A_{i}) -H(Y_i|W,A^{n},Y_{i+1}^{n})\\&\qquad+H(Z_i|W,A^{n},Z^{i-1})-n\epsilon_n\\
	& \overset{(d)}{\geq} \sum_{i=1}^{n} H(X_i)-H(X_{i}|V_{i},A_{i},Y_i)-I(X_{i};Y_i|A_{i})\\&\qquad + H(Y_i|A_{i})+I(X_{i};Z_i|A_{i})- H(Z_i|A_{i}) \\ & \qquad \!-\!H(Y_i|W,A^{n},Y_{i+1}^{n})\!+\!H(Z_i|W,A^{n},Z^{i-1})\!-\!n\epsilon_n\\
	&= \sum_{i=1}^{n} \underbrace{I(X_{i};A_{i},V_{i},Y_i) -I(X_{i};Y_i|A_{i})+I(X_{i};Z_i|A_{i})}_{\triangleq P_{i}} \\&\qquad+ I(W,Y_{i+1}^{n},A^{n \setminus i};Y_i|A_{i})\\&\qquad-I(W,Z^{i-1},A^{n \setminus i};Z_i|A_{i})-n\epsilon_n
\end{align*}
where $(a)$ follows from the deterministic action encoder, $(b)$ follows from Fano's inequality and the Markov chain $(W,K,A^{n \setminus i},X^{n \setminus i},Y_{i+1}^{n},Z^{i-1})-(A_{i},X_{i})-(Y_i,Z_i)$, $(c)$ follows from the Markov chain $(X_{i},W,K,A_{i}^n,Y_i^n)-(A^{i-1},X^{i-1})-(Z^{i-1},Y^{i-1})$, and $(d)$ follows from the definition of $V_{i}$ and  the deterministic action encoder. 

By adding the Csisz\'{a}r's sum identity \cite{CsiszarKornerbook}, i.e., 
$\sum_{i=1}^{n} I(Y_i;Z^{i-1}|A^{n},W,Y_{i+1}^{n})-I(Z_i;Y_{i+1}^{n}|A^{n},W,Z^{i-1})=0$, to the right hand side, we get
\begin{align*}
	&n(\Delta  + \delta_n) \geq \sum_{i=1}^{n} P_{i} + I(W,Y_{i+1}^{n},Z^{i-1},A^{n \setminus i};Y_i|A_{i}) \\&\qquad\qquad\qquad - I(W,Y_{i+1}^{n},Z^{i-1},A^{n \setminus i};Z_i|A_{i})-n\epsilon_n\\
	&\overset{(a)}{=}   \sum_{i=1}^{n} I(X_{i};A_{i},V_{i},Y_i) -I(X_{i};Y_i|A_{i})+I(X_{i};Z_i|A_{i}) \\&\qquad + I(U_{i};Y_i|A_{i})- I(U_{i};Z_i|A_{i})-n\epsilon_n\\
	&\overset{(b)}{=} \sum_{i=1}^{n} I(X_{i};A_{i},V_{i},Y_i) - I(X_{i};Y_i|U_{i},A_{i}) \\&\qquad+I(X_{i};Z_i|U_{i},A_{i})-n\epsilon_n,
\end{align*}
where $(a)$ follows from the definitions of $P_i$ and $U_{i}$ and $(b)$ from the Markov chain $U_{i}-(A_{i},X_{i})-(Y_i,Z_i)$. 

\emph{Secret-key Rate}: We obtain 
\begin{align}
	&n(R_k-\delta_n) \leq H(K)\overset{(a)}{\leq} H(K|W,Z^n) +n\delta_n\nonumber\\
	& \overset{(b)}{=}H(K|W,A^n,Z^n)+n\delta_n\nonumber\\
	& \overset{(c)}{\leq}H(K|W,A^n,Z^n) -H(K|W,A^n,Y^n) + 2n\delta_n\nonumber\\
	&=\sum_{i=1}^n I(K;Y_i|W,A^n,Y_{i+1}^n) - I(K;Z_i|W,A^n,Z^{i-1})+ 2n\delta_n\nonumber\\
	&\overset{(d)}{=} \sum_{i=1}^n  I(K;Y_i|W,A^n,Y_{i+1}^n,Z^{i-1}) \nonumber\\&\qquad- I(K;Z_i|W,A^n,Y_{i+1}^n,Z^{i-1})+ 2n\delta_n\nonumber\\
	&\overset{(e)}{=}\! \sum_{i=1}^n  I(V_i;Y_i|A_i,U_i) \!-\! I(V_i;Z_i|A_i,U_i)\!+\! 2n\delta_n\label{eq:theo1keyrateconverseproof}
\end{align}
where $(a)$ follows by (\ref{eq:secrecyleakage_constraint}), $(b)$ follows from the deterministic action encoder, $(c)$ follows from Fano's inequality, $(d)$ follows from Csisz\'{a}r's sum identity, and $(e)$ follows from the definitions of $U_i$ and $V_i$.

\emph{Action Cost}: We have
\begin{align}
	C+\delta_n &\geq \mathbb{E}\big[\Gamma^{(n)}(A^{n})\big] =\frac{1}{n}\sum_{i=1}^{n} \mathbb{E}\big[\Gamma(A_{i})\big]\label{eq:actioncostconverseprooftheo1}.
\end{align}

Finally, we complete the proof by the standard time-sharing argument and letting $\delta_n\rightarrow 0$.

\textit{Cardinality Bounds}: It can be shown by using the support lemma \cite{CsiszarKornerbook} that $\mathcal{U}$ should have $|\mathcal{X}||\mathcal{A}|-1$ elements to preserve $P_{XA}$ and three more to preserve $H(X|U,V,A,Y)$, $I(X;Z|A,U)-I(X;Y|A,U)$, and $I(V;Y|A,U) - I(V;Z|A,U)$. Similarly, the cardinality $|\mathcal{V}|$ can be limited to at most $(|\mathcal{X}||\mathcal{A}|+2)(|\mathcal{X}||\mathcal{A}|+1)$. 

\section*{Proof of Theorem~\ref{theo:originalchosensecretregion}}\label{sec:proofstheo2}
\subsection{Proof of Achievability} \label{appsub:achprooftheo2}
Fix $P_{A|X}$, $P_{V|XA}$, and $P_{U|V}$ such that $\mathbb{E}[\Gamma(A)]\leq C/(1+\epsilon)$. We use the achievability proof of Theorem~\ref{theo:gsvsregion}. Suppose the key $K'=W_{k'}$, generated as in the generated-secret model, has the same cardinality as the embedded key $K=W_{k}$, i.e., $|\mathcal{K}'|=|\mathcal{K}|$. Consider an encoder $f_2^{(n)}$ with inputs $(X^n,K)$ and outputs $W=(K'+K,W')$. Similarly, consider a decoder $g^{(n)}$ with inputs $(Y^n,W)$ and output $\hat{K}=K'+K-\hat{K}'$, where the addition and subtraction operations are modulo-$|\mathcal{K}|$. The decoder of the generated-secret model is used at the decoder to obtain $\hat{K}'$.

\textit{Error Probability}: We have
\begin{align}
	\Pr[K\ne\hat{K}]=\Pr[K'\ne\hat{K}']\label{eq:errorprobabilityachtheo2}
\end{align}
which is small due to the proof of achievability for the generated-secret model.

\textit{Action Cost}: Similar to the generated-secret model, one can show that the expected cost constraint is satisfied with high probability by using the typical average lemma. 

\textit{Privacy-leakage Rate}: We obtain
\begin{align*}
	&I(X^n;W_{a'},W_{u'},W_{v'},W_k+W_{k'},Z^n|\mathcal{C}_n)
	\\&= I(X^n;W_{a'},W_{u'},W_{v'},Z^n|\mathcal{C}_n) 
	\\&\qquad + I(X^n;W_k+W_{k'}|W_{a'},W_{u'},W_{v'},Z^n,\mathcal{C}_n)\\
	&\leq I(X^n;W_{a'},W_{u'},W_{v'},Z^n|\mathcal{C}_n) 
	\\&\qquad + H(W_k+W_{k'}|W_{a'},W_{u'},W_{v'},Z^n,\mathcal{C}_n)
	\\&\qquad - H(W_k+W_{k'}|W_{a'},W_{u'},W_{v'},Z^n,X^n,W_{k'},\mathcal{C}_n)\\
	&\overset{(a)}{\leq} I(X^n;W_{a'},W_{u'},W_{v'},Z^n|\mathcal{C}_n) +\log|\mathcal{K}|-H(W_k)\\
	&\overset{(b)}{\leq}n[I(X;A,V,Y)\!-\!(I(X;Y|A,U)\!-\!I(X;Z|A,U))\!+\!\delta_{\epsilon}']\\&\leq n[\Delta +\delta_{\epsilon}']
\end{align*}
if $\Delta \geq I(X;A,V,Y)-(I(X;Y|A,U)-I(X;Z|A,U))$, where $(a)$ follows because the embedded key $K=W_k$ is independent of other random variables and $(b)$ follows from uniformity of $W_k$ and (\ref{eq:achprivacyleakagegeneratedsecret}).

\textit{Secrecy-leakage Rate}: Observe that
\begin{align*}
	&I(W_k;W_{a'},W_{u'},W_{v'},W_k+W_{k'},Z^n|\mathcal{C}_n)
	\\&= I(W_k;W_{a'},W_{u'},W_{v'},Z^n|\mathcal{C}_n)
	\\&\qquad +I(W_k;W_k+W_{k'}|W_{a'},W_{u'},W_{v'},Z^n,\mathcal{C}_n)\\
	&\overset{(a)}{=}H(W_k+W_{k'}|W_{a'},W_{u'},W_{v'},Z^n,\mathcal{C}_n) \\&\qquad-H(W_{k'}|W_{a'},W_{u'},W_{v'},Z^n,\mathcal{C}_n)\\
	&\leq\log|\mathcal{K}|-H(W_{k'})+I(W_{k'};W_{a'},W_{u'},W_{v'},Z^n|\mathcal{C}_n)
	\\&\overset{(b)}{\leq}n(\delta_n+\delta_{\epsilon}^{(2)})
\end{align*}
where $(a)$ follows because $K=W_k$ is independent of other random variables and $(b)$ follows by (\ref{eq:achsecrecyleakagetheo1}) and (\ref{eq:achsecretkeyrateanalysistheo1}).

\textit{Secret-key Rate}: We have
\begin{align}
	&H(W_k|\mathcal{C}_n)=\log|\mathcal{K}|\geq H(W_{k'}|\mathcal{C}_n)
	\nonumber\\&\overset{(a)}{\geq}n(I(V;Y|A,U)\!-\!I(V;Z|A,U)\!-\!\delta_{\epsilon}')\!\geq\! n(R_k\!-\!\delta_{\epsilon}')\label{eq:achsecretkeyratetheo2}
\end{align}
if $R_k \leq I(V;Y|A,U)-I(V;Z|A,U)$, where $(a)$ follows by (\ref{eq:achsecretkeyrateanalysistheo1}).

\textit{Storage Rate}: The storage rate is the sum of the storage $R_{w'}$ for the generated-secret model and for $K'+K$. We obtain
\begin{align*}
	&R_w \leq R_{w'}+\frac{1}{n}\log|\mathcal{K}|
	\\&\overset{(a)}{=}I(X,A)+I(V;X|A,Y)+6\delta_{\epsilon}+R_k
	\\&\overset{(b)}{\leq}  I(X,A)+I(V;X|A,Y)+6\delta_{\epsilon}
	\\&\qquad+I(V;Y|A,U)-I(V;Z|A,U)
	\\&\overset{(c)}{=} I(X;A,V)-I(U;Y|A)-I(V;Z|A,U)+6\delta_{\epsilon}
\end{align*}
where $(a)$ follows from the storage rate of the generated-secret model, $(b)$ follows by (\ref{eq:achsecretkeyratetheo2}), and $(c)$ follows from the Markov chain $U-V-(X,A)-(Y,Z)$.

Using the random coding argument, we have that a tuple $(R_k, R_w,\Delta,C)\in \mathbb{R}^4_{+}$ that satisfies (\ref{eq:orcssecretrate})-(\ref{eq:orcsleakagerate}) for some $P_{A|X}$, $P_{V|XA}$, and $P_{U|V}$ such that $\mathbb{E}[\Gamma(A)]\!\leq\! C$ is achievable.	

\subsection{Proof of Converse}\label{appsub:converseprooftheo2}
Use the definitions of $U_i$ and $V_i$ given in Appendix~\ref{appsub:converseprooftheo1} so that $U_i-V_i-(A_i,X_i)-(Y_i,Z_i)$ forms a  Markov chain for all $i=1,2,\ldots,n$. The main step is the proof of converse for the storage rate.	

\textit{Secret-key Rate}: Use similar steps as in (\ref{eq:theo1keyrateconverseproof}) to obtain
\begin{align*}
	R_k\!\leq\! \frac{1}{n}\Big[\sum_{i=1}^n  I(V_i;Y_i|A_i,U_i)\! -\! I(V_i;Z_i|A_i,U_i)\!+\! 3n\delta_n\Big].
\end{align*}

\textit{Action Cost}: Similar to Appendix~\ref{appsub:converseprooftheo1}, we obtain (\ref{eq:actioncostconverseprooftheo1}) for the expected cost constraint.

\textit{Privacy-leakage Rate}: We apply similar steps as in Appendix~\ref{appsub:converseprooftheo1} and obtain  
\begin{align*}
	\Delta\geq&\frac{1}{n}\Big[\sum_{i=1}^{n} I(X_{i};V_{i},A_{i},Y_i) - I(X_{i};Y_i|U_{i},A_{i}) \\&\qquad+I(X_{i};Z_i|U_{i},A_{i})-n\epsilon_n-n\delta_n\Big].
\end{align*}

\textit{Storage Rate}: We have
\begin{align*}
	&n(R_w+\delta_n) \geq \log|\mathcal{W}^{(n)}| \geq H(W)\\
	&\overset{(a)}{=} H(W)+H(A^{n}|W)= H(A^{n}) + H(W|A^{n})\\
	&\overset{(b)}{\geq} H(A^{n})-H(A^{n}|X^{n},Z^{n})+H(A^{n}|X^{n},Z^{n})
	\\&\qquad+ H(W|A^{n},Y^{n})  -H(W|A^{n},X^{n},Y^{n},Z^{n})
	\\&\qquad +H(W|A^{n},X^{n})\\
	&=H(X^{n},Z^{n})-H(X^{n},Z^{n}|A^{n})+H(A^{n}|X^{n},Z^{n})
	\\&\qquad+H(X^{n},Z^{n}|A^{n},Y^{n})-H(X^{n},Z^{n}|A^{n},Y^{n},W)
	\\&\qquad+H(W|A^{n},X^{n})\\
	&= H(X^{n}) + H(Z^{n}|X^{n})-H(Y^{n}|A^{n})
	\\&\qquad +H(Y^{n},Z^{n}|X^{n},A^{n})-H(Z^{n}|X^{n},A^{n})
	\\&\qquad+H(A^{n}|X^{n},Z^{n})-H(X^{n},Z^{n}|A^{n},Y^{n},W,K)\\
	&\qquad -I(X^n,Z^n;K|A^n,Y^n,W)+H(W|A^{n},X^{n})\\
	&=   H(X^{n}) + I(Z^n;A^n|X^n)-H(Y^{n}|A^{n})
	\\&\qquad+H(Y^{n},Z^{n}|X^{n},A^{n})+H(A^{n}|X^{n},Z^{n})
	\\&\qquad-H(X^{n},Z^{n}|A^{n},Y^{n},W,K)-H(K|A^n,Y^n,W)
	\\&\qquad + H(K|A^n,Y^n,W,X^n,Z^n)+H(W|A^{n},X^{n})\\
	&\overset{(c)}{=}\!  H(X^{n})\!+\!H(W,A^n,K|X^n)-H(Y^{n}|A^{n})
	\\&\qquad\!+\!H(Y^{n},Z^{n}|X^{n},A^{n})\!-\!H(X^{n},Z^{n}|A^{n},Y^{n},W,K)
	\\&\qquad -H(K|A^n,Y^n,W)\\
	&\overset{(d)}{\geq}\!  H(X^{n})\!+\!H(K)\!-\!H(Y^{n}|A^{n})\!+\!H(Y^{n},Z^{n}|X^{n},A^{n})
	\\&\qquad -H(X^{n},Z^{n}|A^{n},Y^{n},W,K)-H(K|A^n,Y^n,W)\\
	&\geq  H(X^{n})-H(Y^{n}|A^{n})+H(Y^{n},Z^{n}|X^{n},A^{n})
	\\&\qquad -H(X^{n},Z^{n}|A^{n},Y^{n},W,K)
	\\&\qquad+H(K|A^n,Z^n,W)-H(K|A^n,Y^n,W)\\
	&\geq  \sum_{i=1}^{n} H(X_{i}) - H(Y_i|A_{i})+H(Y_i,Z_i|X_{i},A_{i}) \\&\qquad-H(X_{i},Z_i|A^{n},Y^{n},W,K,X^{i-1},Z^{i-1})
	\\&\qquad+ I(K;Y_i|W,A^n,Y_{i+1}^n) - I(K;Z_i|W,A^n,Z^{i-1})
	\\&\overset{(e)}{=}  \sum_{i=1}^{n} H(X_{i}) - H(Y_i|A_{i})+H(Y_i,Z_i|X_{i},A_{i}) \\&\qquad-H(X_{i},Z_i|A^{n},Y^{n},W,K,X^{i-1},Z^{i-1})
	\\&\qquad+ I(K;Y_i|W,A^n,Y_{i+1}^n,Z^{i-1})
	\\&\qquad- I(K;Z_i|W,A^n,Y_{i+1}^n,Z^{i-1})\\
	&\overset{(f)}{\geq} \sum_{i=1}^{n} H(X_{i}) - H(Y_i|A_{i})+H(Y_i|X_{i},A_{i},Z_i) \\&\qquad+H(Z_i|X_{i},A_{i})-H(X_{i},Z_i|A_{i},Y_i,V_{i})
	\\&\qquad+I(V_i;Y_i|A_i,U_i) - I(V_i;Z_i|A_i,U_i)\\
	&\geq \sum_{i=1}^{n} I(X_{i};A_{i})+I(V_{i};X_{i}|Y_i,A_{i})+I(V_i;Y_i|A_i,U_i)
	\\&\qquad - I(V_i;Z_i|A_i,U_i)\\
	&\overset{(g)}{=}\sum_{i=1}^{n} I(X_{i};A_{i},V_{i}) - I(U_i;Y_i|A_i) - I(V_i;Z_i|A_i,U_i)
\end{align*}
where $(a)$ follows from the deterministic action encoder, $(b)$ follows from the Markov chain $W-(A^n,X^n)-(Y^n,Z^n)$, $(c)$ follows from the Markov chain $(K,W)-(A^n,X^n)-(Y^n,Z^n)$, $(d)$ follows because the embedded key $K$ is independent of $X^n$, and $(e)$ follows from Csisz\'{a}r's sum identity. We use the definitions of $U_i$ and $V_i$ in $(f)$, and $(g)$ follows because $U_i-V_i-(A_i,X_i)-(Y_i,Z_i)$ forms a Markov chain for all $i=1,2,\ldots,n$. 

The converse follows by applying the standard time-sharing argument and letting $\delta_n\rightarrow 0$.

\textit{Cardinality Bounds}: We use the support lemma and satisfy the Markov condition $U-V-(A,X)-(Y,Z)$. We therefore preserve $P_{XA}$ by using $|\mathcal{X}||\mathcal{A}|-1$ elements. The bound in (\ref{eq:orcsstoragerate}) for the storage rate can be written as
\begin{align*}
	&I(X;A,V)-I(U;Y|A)-I(V;Z|A,U)
	\\& = I(X;A) \!+\! I(V;X|A,Y)\!+\! I(V;Y|A,U)\!-\! I(V;Z|A,U).
\end{align*}
We thus have to preserve three more expressions, i.e., $I(V;Y|A,U) - I(V;Z|A,U)$, $H(X|U,V,A,Y)$, and $I(X;Z|A,U)-I(X;Y|A,U)$. One can therefore preserve all expressions in Theorem~\ref{theo:originalchosensecretregion} by using an auxiliary random variable $U$ with $|\mathcal{U}|\leq|\mathcal{X}||\mathcal{A}|+2$ and, similarly, $V$ with $|\mathcal{V}|\leq(|\mathcal{X}||\mathcal{A}|+2)(|\mathcal{X}||\mathcal{A}|+1)$. 	


\section*{Proof of Theorem~\ref{theo:hiddensourcegeneratedsecretregion}}\label{sec:proofstheo3}
\subsection{Proof of Achievability} \label{appsub:achprooftheo3}
Consider the codebook generation, encoding, and decoding steps of the generated-secret model with a visible source. Fix $P_{A|\widetilde{X}}$, $P_{V|\widetilde{X}A}$, and $P_{U|V}$ such that $\mathbb{E}[\Gamma(A)]\leq C/(1+\epsilon)$. 

We apply the steps in Appendix~\ref{appsub:achprooftheo1} after replacing every $X$ with $\widetilde{X}$ and every realization $x^n$ with $\tilde{x}^n$. These replacements guarantee that $(\widetilde{X}^n,A^n,U^n,V^n,Y^n)$ are jointly typical with high probability due to standard arguments used in Appendix~\ref{appsub:achprooftheo1} for error analysis. Markov lemma \cite{Elgamalbook} then ensures that $(X^n,\widetilde{X}^n,A^n,U^n,V^n,Y^n)$ are also jointly typical with high probability. 

\textit{Action Cost}: The typical average lemma shows that the expected cost constraint is satisfied with high probability.

\textit{Storage Rate}: After replacing $X$ with $\widetilde{X}$ in Appendix~\ref{appsub:achprooftheo1}, the total storage rate in this case is $R_w=I(\widetilde{X},A)+I(V;\widetilde{X}|A,Y)+6\delta_{\epsilon}$ because $U-V-(A,\widetilde{X})-(A,X)-(Y,Z)$ forms a Markov chain.

\textit{Privacy-leakage Rate}: Consider the leakage about the hidden source averaged over the random codebook $\mathcal{C}_n$. 
\begin{align}
	& I(X^{n};W_a,W_u,W_v,Z^{n}|\mathcal{C}_n)
	\nonumber\\&\leq I(X^{n};W_a,W_u,M^{\prime},W_v,Z^{n}|\mathcal{C}_n)\nonumber\\
	&= I(X^{n};W_a,M,W_v,Z^{n}|\mathcal{C}_n)\nonumber\\
	&= H(X^n|\mathcal{C}_n)-H(X^{n},W_a,M,W_v,Z^{n}|\mathcal{C}_n)
	\nonumber\\&\quad +H(W_a,M,W_v|\mathcal{C}_n)+H(Z^n|W_a,M,W_v,\mathcal{C}_n)\nonumber\\
	&\overset{(a)}{=}\! -H(Z^{n}|X^{n},\mathcal{C}_n)-H(W_a,A^n,M,W_v|X^{n},Z^{n},\mathcal{C}_n)\nonumber\\&\quad +H(W_a,M,W_v|\mathcal{C}_n)+H(Z^n|W_a,M,W_v,\mathcal{C}_n)\nonumber\\
	&=- H(Z^{n}|X^{n},A^{n},\mathcal{C}_n) -I(A^n;Z^n|X^n,\mathcal{C}_n) \nonumber\\&\quad\!-\!H(A^n|X^{n},Z^{n},\!\mathcal{C}_n\!)\!-\!H(W_a,M,W_v|X^{n},Z^{n},A^n,\!\mathcal{C}_n\!)\nonumber\\&\quad+H(W_a,M,W_v|\mathcal{C}_n)+H(Z^n|W_a,M,W_v,\mathcal{C}_n)\nonumber
					\end{align}
					\begin{align}
	& \overset{(b)}{=}- H(Z^{n}|X^{n},A^{n}) -H(A^n|X^n,\mathcal{C}_n) \nonumber\\&\quad-H(W_a,M,W_v,W|X^{n},Z^{n},A^n,\mathcal{C}_n)\nonumber\\&\quad+H(W_a,M,W_v|\mathcal{C}_n)+H(Z^n|W_a,M,W_v,\mathcal{C}_n)\nonumber\\
	&=- H(Z^{n}|X^{n},A^{n}) -H(A^n|X^n,\mathcal{C}_n) 
	\nonumber\\&\quad-H(W_a,M,W_v,W,V^n|X^{n},Z^{n},A^n,\mathcal{C}_n)
	\nonumber\\&\quad +H(V^n|X^{n},Z^{n},A^n,W_a,M,W_v,W,\mathcal{C}_n) \nonumber\\&\quad+H(W_a,M,W_v|\mathcal{C}_n)+H(Z^n|W_a,M,W_v,\mathcal{C}_n)\nonumber\\
	&\overset{(c)}{\leq}-\! H(Z^{n}|X^{n},A^{n})\! -\!H(A^n|X^n,\mathcal{C}_n)
	\nonumber\\&\quad -\!H(V^n|X^{n},Z^{n},A^n,\mathcal{C}_n)\!+\!n\epsilon_n\!+\!H(W_a,M,W_v|\mathcal{C}_n)
	\nonumber\\&\quad+H(Z^n|W_a,M,W_v,\mathcal{C}_n)\nonumber\\
	&\overset{(d)}{\leq}\!-\! H(Z^{n}|X^{n},A^{n})\!-\!H(V^n,A^n|X^{n},\mathcal{C}_n)\!+\!H(W_a|\mathcal{C}_n)
	\nonumber\\&\quad\!+\!H(M|\mathcal{C}_n)\!+\!H(W_v|\mathcal{C}_n)\!+\!H(Z^n|A^n,U^n,\mathcal{C}_n)\!+\! n\epsilon_n\nonumber\\
	&\overset{(e)}{\leq}\!-\! H(Z^{n}|X^{n},A^{n})\!-\!n[H(V,A|X)\!-\!H(V,A|\widetilde{X})\!-\!2\delta_{\epsilon}]
	\nonumber\\&\quad\!+\!H(W_a|\mathcal{C}_n)\!+\!H(M|\mathcal{C}_n)\!+\!H(W_v|\mathcal{C}_n)
	\nonumber\\&\quad\!+\!H(Z^n|A^n,U^n,\mathcal{C}_n)+ n\epsilon_n\nonumber\\
	&\overset{(f)}{\leq}-n[ H(Z|X,A) -H(V,A|X)+H(V,A|\widetilde{X})+7\delta_{\epsilon} \nonumber\\&\quad+I(\widetilde{X};A)+I(\widetilde{X};U|A)\!+\!I(V;\widetilde{X}|A,U)
	\nonumber\\&\quad\!-\!I(V;Y|A,U)\!+\!H(Z|A,U)+\epsilon_n]\nonumber\\
	& \overset{(g)}{=} n[I(\widetilde{X};V,A)\!-\!H(V,A|X)+H(V,A|\widetilde{X})
	\nonumber\\&\quad\!-\!I(V;Y|A,U)+I(X;Z|A,U)+\delta_{\epsilon}^{(3)}]\nonumber \\
	&= n[I(X;V,A)\!-\!I(V;Y|A,U)\!+\!I(X;Z|A,U)\!+\!\delta_{\epsilon}^{(3)}]\nonumber \\
	& \overset{(h)}{=}\! n[I(X;A,V,Y)\!-\!(I(X;Y|A,U)\!-\!I(X;Z|A,U))\!+\!\delta_{\epsilon}^{(3)}]\nonumber\\&\leq n[\Delta +\delta_{\epsilon}^{(3)}]\label{eq:achprivacyleakagetheo3}
\end{align}
if $\Delta \geq I(X;A,V,Y)-(I(X;Y|A,U)-I(X;Z|A,U))$, where $(a)$ follows since given $\mathcal{C}_n$, $W_a$ determines $A^n$,\\
$(b)$ follows since $Z^n-(X^n,A^n)-\mathcal{C}_n$ forms a Markov chain and $(W_a,W_u,W_v)$ determine the helper data $W$,\\
$(c)$ follows from the Markov chain $V^n-(X^n,A^n,W,\mathcal{C}_n)-Y^n$ and Fano's inequality applied as 
\begin{align*}
	&H(V^n|X^{n},Z^{n},A^n,W_a,M,W_v,W,\mathcal{C}_n)
	\\&\!\leq\! H(V^n|X^{n},A^n,W,\mathcal{C}_n)\!\leq\! H(V^n|Y^n,A^n,W,\mathcal{C}_n)\! \leq \!n\epsilon_n,
\end{align*}
$(d)$ follows from the Markov chain $V^n-(X^n,A^n,\mathcal{C}_n)-Z^n$ and from the facts that given the codebook, $W_a$ determines $A^n$ and $(W_a,M)$ determine $U^n$,\\ 
$(e)$ follows from the following inequality
\begin{align*}
	&H(V^n,A^n|X^n,\mathcal{C}_n)
	\\& = H(A^n|X^n,\mathcal{C}_n) + H(V^n|X^n,A^n,\widetilde{X}^n,\mathcal{C}_n)
	\\&\quad+I(V^n;\widetilde{X}^n|X^n,A^n,\mathcal{C}_n)\\	
	&\geq H(\widetilde{X}^n,A^n|X^n,\mathcal{C}_n)-H(\widetilde{X}^n|X^n,A^n,V^n,\mathcal{C}_n)
	\\&\overset{(a)}{\geq}H(\widetilde{X}^n|X^n)-H(\widetilde{X}^n|X^n,A^n,V^n,\mathcal{C}_n)
	\\&\overset{(b)}{\geq} n[H(\widetilde{X}|X)-H(\widetilde{X}|X,A,V)-2\delta_{\epsilon}]
	\\&\overset{(c)}{=}n[H(V,A|X)-H(V,A|\widetilde{X})-2\delta_{\epsilon}]
\end{align*}
where $(a)$ follows since $\widetilde{X}^n-X^n-\mathcal{C}_n$ forms a Markov chain, $(b)$ follows by applying Lemma~\ref{lemma:3} to bound the term $H(\widetilde{X}^n|X^n)$ and Lemma~\ref{lemma:4} to bound the term $H(\widetilde{X}^n|X^n,A^n,V^n,\mathcal{C}_n)$, and $(c)$ follows due to the Markov chain $(V,A)-\widetilde{X}-X$, 
\\ 
$(f)$ follows from the codebook generation, from the memoryless property of the source and side information channels, from Lemma~\ref{lemma:3} applied to $H(Z^n|X^n,A^n)$, and from Lemma~\ref{lemma:4} applied to $H(Z^n|A^n,U^n,\mathcal{C}_n)$,\\
$(g)$ follows from the Markov chains $U-(V,A)-\widetilde{X}$ and $U-(A,X)-Z$,\\
$(h)$ follows from the Markov chain $U-V-(A,X)-Y$.

\textit{Secrecy-leakage Rate}: The secrecy-leakage rate analysis follows by replacing every $X^n$ in Appendix~\ref{appsub:achprooftheo1} with $\widetilde{X}^n$ when bounding the term $H(A^n,U^n,V^n,Z^n|\mathcal{C}_n)$ since, this time, $(U^n,V^n,\mathcal{C}_n)-(A^n,\widetilde{X}^n)-Z^n$ and $U-V-(A,\widetilde{X})-(Y,Z)$ form Markov chains. Use
\begin{align*}
	&H(Z^n|\widetilde{X}^n,A^n,\mathcal{C}_n)
	\\&= H(Z^n|\widetilde{X}^n,A^n,X^n,\mathcal{C}_n)+I(Z^n;X^n|\widetilde{X}^n,A^n,\mathcal{C}_n)
	\\&\!\overset{(a)}{=}\!H(Z^n|A^n,X^n)\!+\!H(X^n|\widetilde{X}^n)\!-\!H(X^n|\widetilde{X}^n,\!A^n,Z^n,\!\mathcal{C}_n)
	\\&\!\overset{(b)}{\geq}\!n(H(Z|A,\!X)\!+\!H(X|\widetilde{X})\!-\!2\delta_{\epsilon})\!-\!H(X^n|\widetilde{X}^n,\!A^n,\!Z^n,\!\mathcal{C}_n)
	\\&\overset{(c)}{\geq}\!n(H(Z|A,X)\!+\!H(X|\widetilde{X},A)\!-\!H(X|\widetilde{X},A,Z)\!-\!3\delta_{\epsilon})
	\\&\overset{(d)}{=}n(H(Z|\widetilde{X},A)-3\delta_{\epsilon})
\end{align*}
where $(a)$ follows because $Z^n-(A^n,X^n)-(\widetilde{X}^n,\mathcal{C}_n)$ and $X^n-\widetilde{X}^n-(A^n,\mathcal{C}_n)$ form Markov chains, $(b)$ follows by applying Lemma~\ref{lemma:3} to bound the terms $H(Z^n|A^n,X^n)$ and $H(X^n|\widetilde{X}^n)$ because $Z^n$ is i.i.d. $\sim P_{Z|XA}$ and $X^n$ is i.i.d. $\sim P_{X|\widetilde{X}}$, $(c)$ follows from the Markov chain $X-\widetilde{X}-A$ and by applying Lemma~\ref{lemma:4} to bound the term $H(X^n|\widetilde{X}^n,A^n,Z^n,\mathcal{C}_n)$, and $(d)$ follows from the Markov chain $Z-(A,X)-\widetilde{X}$. We thus obtain
\begin{align}
	I(W_k;W_a,W_u,W_v,Z^n|\mathcal{C}_n)\leq n\delta_{\epsilon}^{(4)}\label{eq:achsecrecyleakagetheo3}.
\end{align}

\textit{Secret-key Rate}: Using the codebook generation in Appendix~\ref{appsub:achprooftheo1} and the fact that now $V-(A,\widetilde{X},U)-Y$ forms a Markov chain, it is straightforward to show that 
\begin{align}
	&H(W_k|\mathcal{C}_n)\geq n[I(V;Y|A,U)-I(Y;Z|A,U)-\delta_{\epsilon}^{(3)}]\nonumber\\
	&\geq n(R_k-\delta_{\epsilon}^{(3)})\label{eq:achsecretkeyratetheo3}
\end{align}
if $R_k\leq I(V;Y|A,U)-I(V;Z|A,U)$.

Using the random coding argument, we have that a tuple $(R_k, R_w,\Delta,C)\in \mathbb{R}^4_{+}$ that satisfies (\ref{eq:hiddensecretrate})-(\ref{eq:hiddenleakagerate}) for some $P_{A|\widetilde{X}}$, $P_{V|\widetilde{X}A}$, and $P_{U|V}$ such that $\mathbb{E}[\Gamma(A)]\!\leq\! C$ is achievable.

\subsection{Proof of Converse}\label{appsub:converseprooftheo3}
Use the definitions of $U_i$ and $V_i$ given in Appendix~\ref{appsub:converseprooftheo1} so that $U_i-V_i-(A_i,\widetilde{X}_i)-(A_i,X_i)-(Y_i,Z_i)$ forms a  Markov chain for all $i=1,2,\ldots,n$.

\textit{Storage Rate}: Replace every $X^n$ with $\widetilde{X}^n$ and every $X_i$ with $\widetilde{X}_i$ for all $i=1,2,\ldots,n$ in Appendix~\ref{appsub:converseprooftheo1} and apply similar steps to obtain 
\begin{align*}
	R_w\!\geq\! \frac{1}{n}\Big[\sum_{i=1}^{n} I(\widetilde{X}_{i};A_{i})\!+\!I(V_{i};\widetilde{X}_{i}|A_{i},Y_i)\!-\!n\epsilon_n\!-\!n\delta_n\Big].
\end{align*}

\textit{Privacy-leakage Rate}: We apply similar steps as in Appendix~\ref{appsub:converseprooftheo1}. It is also straightforward to show that $(W,K,A^{n \setminus i},X^{n \setminus i},Y_{i+1}^{n},Z^{i-1})-(A_{i},X_{i})-(Y_i,Z_i)$, $(X_{i},W,K,A_{i}^n,Y_i^n)-(A^{i-1},X^{i-1})-(Z^{i-1},Y^{i-1})$, and $U_{i}-(A_{i},X_{i})-(Y_i,Z_i)$ form Markov chains for all $i=1,2,\ldots,n$ also for a hidden source. We thus obtain
\begin{align*}
	\Delta\geq&\frac{1}{n}\Big[\sum_{i=1}^{n} I(X_{i};A_{i},V_{i},Y_i) - I(X_{i};Y_i|A_{i},U_{i}) \\&\qquad+I(X_{i};Z_i|A_{i},U_{i})-n\epsilon_n-n\delta_n\Big].
\end{align*}

\textit{Secret-key Rate}: The converse is similar to the converse for a visible source with the generated-secret model. By applying similar steps as in Appendix~\ref{appsub:converseprooftheo1}, we obtain
\begin{align*}
	R_k\!\leq\! \frac{1}{n}\Big[\sum_{i=1}^n  I(V_i;Y_i|A_i,U_i) \!-\! I(V_i;Z_i|A_i,U_i)\!+\! 3n\delta_n\Big].
\end{align*}

\textit{Action Cost}: We obtain (\ref{eq:actioncostconverseprooftheo1}) for the expected cost constraint. 	

The converse follows by applying the standard time-sharing argument and letting $\delta_n\rightarrow 0$.

\textit{Cardinality Bounds}: We use the support lemma and satisfy the Markov condition $U-V-(A,\widetilde{X})-(A,X)-(Y,Z)$, so we preserve $P_{\widetilde{X}A}$ by using $|\mathcal{\widetilde{X}}||\mathcal{A}|-1$ real-valued continuous functions. We have to preserve four more expressions, i.e., $I(V;Y|A,U) - I(V;Z|A,U)$, $H(\widetilde{X}|U,V,A,Y)$, $H(X|U,V,A,Y)$, and $I(X;Z|A,U)-I(X;Y|A,U)$. One can therefore preserve all expressions in Theorem~\ref{theo:hiddensourcegeneratedsecretregion} by using an auxiliary random variable $U$ with $|\mathcal{U}|\leq|\mathcal{X}||\mathcal{A}|+3$ and, similarly, $V$ with $|\mathcal{V}|\leq(|\mathcal{X}||\mathcal{A}|+3)(|\mathcal{X}||\mathcal{A}|+2)$.


\section*{Proof of Theorem~\ref{theo:hiddensourcechosensecretregion}}\label{sec:proofstheo4}
\subsection{Proof of Achievability}\label{appsub:achprooftheo4}
Fix $P_{A|\widetilde{X}}$, $P_{V|\widetilde{X}A}$, and $P_{U|V}$ such that $\mathbb{E}[\Gamma(A)]\leq C/(1+\epsilon)$. We use the achievability proof of Theorem~\ref{theo:hiddensourcegeneratedsecretregion}. Suppose the key $K'=W_{k'}$ generated as in the generated-secret model for a hidden source has the same cardinality as the embedded key $K=W_{k}$, i.e., $|\mathcal{K}'|=|\mathcal{K}|$. Consider an encoder $f_4^{(n)}$ with inputs $(\widetilde{X}^n,K)$ and outputs $W=(K'+K,W')$. Similarly, consider a decoder $g^{(n)}$ with inputs $(Y^n,W)$ and output $\hat{K}=K'+K-\hat{K}'$, where the addition and subtraction operations are modulo-$|\mathcal{K}|$. Note that the decoder of the generated-secret model for a hidden source is used at the decoder to obtain $\hat{K}'$.

\textit{Error Probability}: We obtain (\ref{eq:errorprobabilityachtheo2}), which is small due to the proof of achievability for Theorem~\ref{theo:hiddensourcegeneratedsecretregion}.

\textit{Action Cost}: Similar to Appendix~\ref{appsub:achprooftheo3}, one can show that the expected cost constraint is satisfied with high probability by using the typical average lemma. 

\textit{Privacy-leakage Rate}: We have
\begin{align*}
	&I(X^n;W_{a'},W_{u'},W_{v'},W_k+W_{k'},Z^n|\mathcal{C}_n)
	\\&\leq I(X^n;W_{a'},W_{u'},W_{v'},Z^n|\mathcal{C}_n)+\log|\mathcal{K}|
	\\&\qquad -H(W_k+W_{k'}|W_{a'},W_{u'},W_{v'},Z^n,X^n,W_{k'},\mathcal{C}_n)\\
	&\overset{(a)}{\leq}n[I(X;A,V,Y)\!-\!(I(X;Y|A,U)\!-\!I(X;Z|A,U))\!+\!\delta_{\epsilon}^{(3)}]\\
	&\leq n[\Delta +\delta_{\epsilon}^{(3)}]
\end{align*}
if $\Delta \geq I(X;A,V,Y)-(I(X;Y|A,U)-I(X;Z|A,U))$, where $(a)$ follows because $K=W_k$ is independent of other random variables, and from uniformity of $W_k$ and (\ref{eq:achprivacyleakagetheo3}).

\textit{Secrecy-leakage Rate}: We obtain
\begin{align*}
	&I(W_k;W_{a'},W_{u'},W_{v'},W_k+W_{k'},Z^n|\mathcal{C}_n)
	\\&= I(W_k;W_{a'},W_{u'},W_{v'},Z^n|\mathcal{C}_n)
	\\&\qquad +I(W_k;W_k+W_{k'}|W_{a'},W_{u'},W_{v'},Z^n,\mathcal{C}_n)
	\\&\overset{(a)}{=}H(W_k+W_{k'}|W_{a'},W_{u'},W_{v'},Z^n,\mathcal{C}_n) \\&\qquad-H(W_{k'}|W_{a'},W_{u'},W_{v'},Z^n,\mathcal{C}_n)
	\\&\leq\log|\mathcal{K}|-H(W_{k'})+I(W_{k'};W_{a'},W_{u'},W_{v'},Z^n|\mathcal{C}_n)\\
	&\overset{(b)}{\leq}n(\delta_n+\delta_{\epsilon}^{(4)})
\end{align*}
where $(a)$ follows because $K=W_k$ is independent of other random variables and $(b)$ follows by (\ref{eq:achsecrecyleakagetheo3}) and (\ref{eq:achsecretkeyratetheo3}).

\textit{Secret-key Rate}: Observe that
\begin{align}
	&H(W_k|\mathcal{C}_n)=\log|\mathcal{K}|\geq H(W_{k'}|\mathcal{C}_n)
	\nonumber\\&\overset{(a)}{\geq}n(I(Y;V|A,U)-I(Z;V|A,U)-\delta_{\epsilon}^{(3)})\nonumber\\&\geq n(R_k-\delta_{\epsilon}^{(3)})\label{eq:achsecretkeyrattheo4}
\end{align}
if $R_k \leq I(V;Y|A,U)-I(V;Z|A,U)$, where $(a)$ follows by (\ref{eq:achsecretkeyratetheo3}).

\textit{Storage Rate}: The storage rate is the sum of the storage $R_{w'}$ for a hidden source with the generated-secret model and for $K'+K$. We obtain
\begin{align*}
	&R_w \leq R_{w'}+\frac{1}{n}\log|\mathcal{K}|
	\\&\overset{(a)}{=}I(\widetilde{X},A)+I(V;\widetilde{X}|A,Y)+6\delta_{\epsilon}+R_k\\
	&\overset{(b)}{\leq}  I(\widetilde{X},A)+I(V;\widetilde{X}|A,Y)+6\delta_{\epsilon}
	\\&\qquad+I(V;Y|A,U)-I(V;Z|A,U)\\
	&\overset{(c)}{=} I(\widetilde{X};A,V)-I(U;Y|A)-I(V;Z|A,U)+6\delta_{\epsilon}
\end{align*}
where $(a)$ follows from the storage rate for a hidden source with the generated-secret model, $(b)$ follows by (\ref{eq:achsecretkeyrattheo4}), and $(c)$ follows from the Markov chain $U-V-(A,\widetilde{X})-(Y,Z)$.

Using the random coding argument, we have that a tuple $(R_k, R_w,\Delta,C)\in \mathbb{R}^4_{+}$ that satisfies (\ref{eq:hiddenchosensecretrate})-(\ref{eq:hiddenchosenleakagerate}) for some $P_{A|\widetilde{X}}$, $P_{V|\widetilde{X}A}$, and $P_{U|V}$ such that $\mathbb{E}[\Gamma(A)]\!\leq\! C$ is achievable.

\subsection{Proof of Converse}\label{appsub:converseprooftheo4}
Use the definitions of $U_i$ and $V_i$ given in Appendix~\ref{appsub:converseprooftheo1} so that $U_i-V_i-(A_i,\widetilde{X}_i)-(A_i,X_i)-(Y_i,Z_i)$ forms a  Markov chain for all $i=1,2,\ldots,n$.  

\textit{Secret-key Rate}: The converse for the secret-key rate is similar to the converse for a hidden source with the generated-secret model. We obtain
\begin{align*}
	R_k\!\leq\! \frac{1}{n}\Big[\sum_{i=1}^n  I(V_i;Y_i|A_i,U_i) \!-\! I(V_i;Z_i|A_i,U_i)\!+\! 3n\delta_n\Big].
\end{align*}

\textit{Action Cost}: Similar to Appendix~\ref{appsub:converseprooftheo3}, we obtain (\ref{eq:actioncostconverseprooftheo1}) for the expected cost constraint.

\textit{Privacy-leakage Rate}: We apply similar steps to Appendix~\ref{appsub:converseprooftheo3}. It is straightforward to show that $(W,K,A^{n \setminus i},X^{n \setminus i},Y_{i+1}^{n},Z^{i-1})-(A_{i},X_{i})-(Y_i,Z_i)$, $(X_{i},W,K,A_{i}^n,Y_i^n)-(A^{i-1},X^{i-1})-(Z^{i-1},Y^{i-1})$, and $U_{i}-(A_{i},X_{i})-(Y_i,Z_i)$ form Markov chains for all $i=1,2,\ldots,n$ also for a hidden source and an embedded secret key $K$. We thus obtain
\begin{align*}
	\Delta\geq&\frac{1}{n}\Big[\sum_{i=1}^{n} I(X_{i};A_{i},V_{i},Y_i) - I(X_{i};Y_i|A_{i},U_{i}) \\&\qquad+I(X_{i};Z_i|A_{i},U_{i})-n\epsilon_n-n\delta_n\Big].
\end{align*}

\textit{Storage Rate}: This time, we apply similar steps as in Appendix~\ref{appsub:converseprooftheo2}. Replace every sequence $X^n$ with $\widetilde{X}^n$ and every $X_i$ with $\widetilde{X}_i$ for all $i=1,2,\ldots,n$. Using similar steps as in Appendix~\ref{appsub:converseprooftheo2}, and the facts that  $U_i-V_i-(A_i,\widetilde{X}_i)-(Y_i,Z_i)$  for all $i=1,2,\ldots,n$ and $(K,W)-(A^n,\widetilde{X}^n)-(Y^n,Z^n)$ form Markov chains, we obtain 
\begin{align*}
	R_w\geq& \frac{1}{n}\Big[\sum_{i=1}^{n} I(\widetilde{X}_{i};A_{i},V_{i}) - I(U_i;Y_i|A_i)
	\\&\qquad - I(V_i;Z_i|A_i,U_i)-n\delta_n\Big].
\end{align*}

The converse follows by applying the standard time-sharing argument and letting $\delta_n\rightarrow 0$.

\textit{Cardinality Bounds}: We use the support lemma. One has to satisfy the Markov condition $U-V-(A,\widetilde{X})-(A,X)-(Y,Z)$. We therefore preserve $P_{\widetilde{X}A}$ by using $|\mathcal{\widetilde{X}}||\mathcal{A}|-1$ real-valued continuous functions. The bound in (\ref{eq:storagehiddensourcechosensecret}) can be written as
\begin{align*}
	&I(\widetilde{X};A,V)-I(U;Y|A)-I(V;Z|A,U)
	\\& = I(\widetilde{X};A) \!+\! I(\widetilde{X};V|A,Y)\!+\! I(V;Y|A,U)\!-\! I(V;Z|A,U).
\end{align*}
We therefore have to preserve four more expressions, i.e., $I(V;Y|A,U) - I(V;Z|A,U)$, $H(\widetilde{X}|U,V,A,Y)$, $H(X|U,V,A,Y)$, and $I(X;Z|A,U)-I(X;Y|A,U)$. One can therefore preserve all expressions in Theorem~\ref{theo:hiddensourcechosensecretregion} by using an auxiliary random variable $U$ with $|\mathcal{U}|\leq|\mathcal{X}||\mathcal{A}|+3$ and, similarly, $V$ with $|\mathcal{V}|\leq(|\mathcal{X}||\mathcal{A}|+3)(|\mathcal{X}||\mathcal{A}|+2)$.

\ifCLASSOPTIONcaptionsoff
  \newpage
\fi

\bibliographystyle{IEEEtran}
\bibliography{IEEEabrv,references}
%



\begin{IEEEbiography}[{\includegraphics[width=1.05in,height=1.22in]{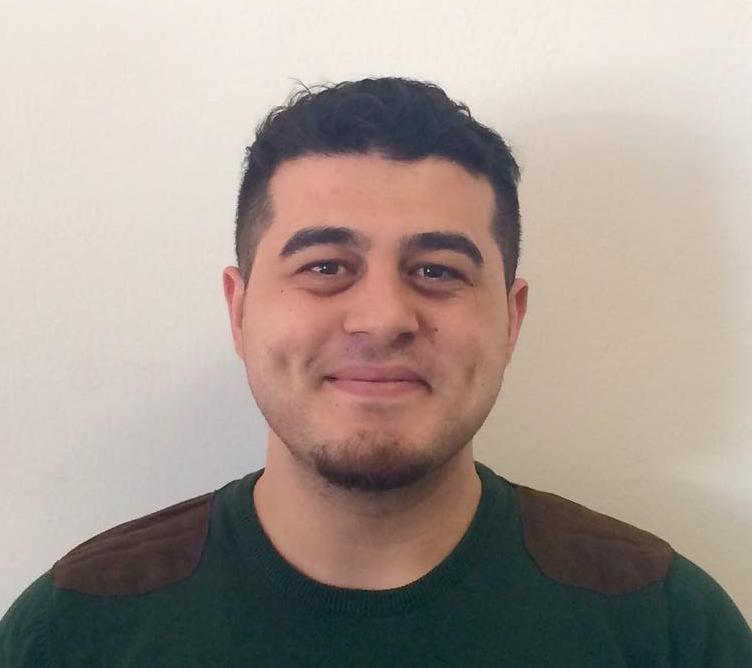}}]{Onur G\"unl\"u}
(S'10) received the B.Sc. degree in electrical and electronics engineering from Bilkent University, Ankara, in 2011, and the M.Sc. degree in communications engineering from the Technical University of Munich (TUM), Munich, in 2013, where he is currently pursuing the Dr.-Ing. degree. He is a Research and Teaching Assistant with TUM. In 2018, he was visiting the Information and Communication Theory Lab, TU Eindhoven, The Netherlands. His research interests include information theoretic privacy and security, code design for secret key generation from the source model, statistical signal processing for biometric secrecy systems and physical unclonable functions (PUFs).
\end{IEEEbiography}

\begin{IEEEbiography}[{\includegraphics[width=1in,height=1.25in]{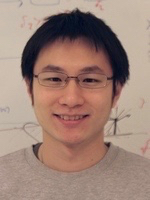}}]{Kittipong Kittichokechai}
	(S'10--M'15) received the B.Eng. degree in electrical engineering from Chulalongkorn University, Thailand, in 2007, and the M.Sc. and Ph.D. degrees in electrical engineering from the KTH Royal Institute of Technology, Sweden, in 2009 and 2014, respectively. In 2012, he was a Visiting Scholar at the Information Systems Laboratory (ISL), Stanford University, USA. From 2014 to 2016, he was a Post-Doctoral Researcher with Technische Universit\"{a}t Berlin, Germany. Since 2016, he has been a Researcher with Ericsson Research, Stockholm, Sweden, where he has been contributing to the development of new communication technologies of 5G. His research interests include network information theory, information theoretic security and privacy, distributed detection, and their applications in wireless communications. 
	
	K. Kittichokechai was a recipient of the Ananda Mahidol Foundation Scholarship under the Royal Patronage of His Majesty the King of Thailand. 
\end{IEEEbiography}

\begin{IEEEbiography}
	[{\includegraphics[width=1in,height=1.25in,clip,keepaspectratio]{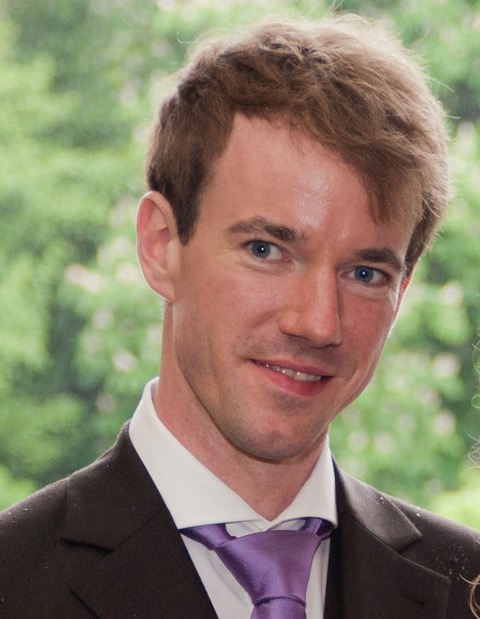}}]{Rafael F. Schaefer}
	(S'08--M'12--SM'17) received the Dipl.-Ing. degree in electrical engineering and computer science from Technische Universit\"at Berlin, Germany, in 2007, and the Dr.-Ing. degree in electrical engineering from Technische Universit\"at M\"unchen, Germany, in 2012. From 2007 to 2010, he was a Research and Teaching Assistant with Technische Universit\"at Berlin and from 2010 to 2013, with Technische Universit\"at M\"unchen. From 2013 to 2015, he was a Post-Doctoral Research Fellow with Princeton University. Since 2015, he has been an Assistant Professor with Technische Universit\"at Berlin. Among his publications is the recent book \emph{Information Theoretic Security and Privacy of Information Systems} (Cambridge University Press, 2017). He is an Associate Member of the IEEE Information Forensics and Security Technical Committee. He was a recipient of the VDE Johann-Philipp-Reis Prize in 2013. He received the best paper award of the German Information Technology Society (ITG-Preis) in 2016. He was one of the exemplary reviewers of the \textsc{IEEE Communication Letters} in 2013. He is currently an Associate Editor of the \textsc{IEEE Transactions on Communications}.
\end{IEEEbiography}

\vspace*{6.8cm}
\begin{IEEEbiography}[{\includegraphics[width=1in,height=1.25in,clip,keepaspectratio]{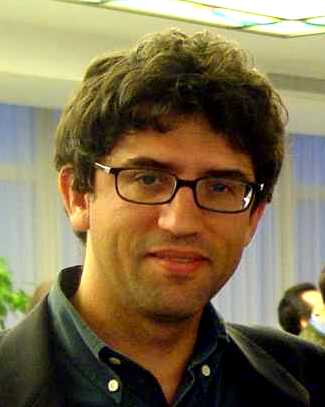}}]{Giuseppe Caire}
(S'92--M'94--SM'03--F'05) was born in Torino, Italy, in 1965. He received the B.Sc. in electrical engineering from the Politecnico di Torino, Italy, in 1990, the M.Sc. in electrical engineering from Princeton University in 1992, and the Ph.D. from the Politecnico di Torino in 1994. He was a Post-Doctoral Research Fellow with the European Space Agency, ESTEC, Noordwijk, The Netherlands, from 1994 to 1995, an Assistant Professor in telecommunications with the Politecnico di Torino, an Associate Professor with the University of Parma, Italy, a Professor with the Department of Mobile Communications, Eurecom Institute, Sophia-Antipolis, France, a Professor of electrical engineering with the Viterbi School of Engineering, University of Southern California, Los Angeles, CA, USA, and is currently an Alexander von Humboldt Professor with the Electrical Engineering and Computer Science Department, Technische Universit\"at Berlin, Germany. 

His main research interests include communications theory, information theory, channel and source coding with particular focus on wireless communications. He served as Associate Editor for the \textsc{IEEE Transactions on Communications} from 1998 to 2001 and as Associate Editor for the \textsc{IEEE Transactions on Information Theory} from 2001 to 2003. He received the Jack Neubauer Best System Paper Award from the IEEE Vehicular Technology Society in 2003, the IEEE Communications Society \& Information Theory Society Joint Paper Award in 2004 and in 2011, the Okawa Research Award in 2006, the Alexander von Humboldt Professorship in 2014, and the Vodafone Innovation Prize in 2015. Giuseppe Caire is a Fellow of IEEE since 2005. He has served on the Board of Governors of the IEEE Information Theory Society from 2004 to 2007, and as an officer from 2008 to 2013. He was President of the IEEE Information Theory Society in 2011.
\end{IEEEbiography}





\end{document}